\newtheorem{definition}{Definition}
\newtheorem{example}{Example}
\newtheorem{lemma}{\textsc{Lemma}{\bfseries}{\itshape}}
\newtheorem{theorem}{Theorem}
\newtheorem{corollary}{Corollary}
\newtheorem{observation}{Observation}
\newcommand{\ie}{\emph{i.e.,}\xspace}
\newcommand{\eg}{\emph{e.g.,}\xspace}
\newcommand{\etal}{\emph{et al.}\xspace}
\newcommand{\eat}[1]{}
\title{Time is What Prevents Everything from Happening at Once: Propagation Time-conscious Influence Maximization}
\author{%
{Hui Li{\small $~^{\#1}$}, Sourav S Bhowmick{\small $~^{*2}$}, Jiangtao Cui{\small $~^{\dag3}$}, Jianfeng Ma{\small $~^{\#4}$} }%
\vspace{1.6mm}\\
\fontsize{10}{10}\selectfont\itshape
$^{\#}$\, School of Cyber Engineering, Xidian University, China\\
\fontsize{9}{9}\selectfont\ttfamily\upshape
%
$^{1}$\,hli@xidian.edu.cn, $^{4}$\,jfma@mail.xidian.edu.cn%
\vspace{1.2mm}\\
\fontsize{10}{10}\selectfont\rmfamily\itshape
$^{*}$\,School of Computer Science and Engineering, Nanyang Technological University, Singapore\\
\fontsize{9}{9}\selectfont\ttfamily\upshape
$^{2}$\,assourav@ntu.edu.sg
\vspace{1.2mm}\\
\fontsize{10}{10}\selectfont\rmfamily\itshape
$^{\dag}$\,School of Computer Science and Technology, Xidian University, China\\
\fontsize{9}{9}\selectfont\ttfamily\upshape
$^{3}$\,cuijt@xidian.edu.cn
}
\begin{document}
\maketitle
\begin{abstract}
The influence maximization (\textsc{im}) problem as defined in the seminal paper by Kempe~\etal  has received widespread attention from various research communities, leading to the design of a wide variety of solutions.\eat{ Effective solutions to the \textsc{im} problem is important to companies for designing recommendation systems and viral marketing strategies.} Unfortunately, this classical \textsc{im} problem ignores the fact that time taken for influence propagation to reach the largest scope can be significant in real-world social networks,  during which the underlying network itself may have evolved. This phenomenon may have considerable adverse impact on the quality of selected seeds and as a result all existing techniques that use this classical definition as their building block generate seeds with suboptimal influence spread.

In this paper, we revisit the classical \textsc{im} problem and propose a more realistic version called \textsc{proteus-im} (\textbf{Pro}pagation \textbf{T}im\textbf{e}-conscio\textbf{us} \textbf{I}nfluence \textbf{M}aximization) to replace it by addressing the aforementioned limitation. Specifically, as influence propagation may take time, we assume that the underlying social network may evolve during influence propagation. Consequently, \textsc{proteus-im} aims to select seeds in the \textit{current network} to maximize influence spread in the future instance of the network at the end of influence propagation process without assuming complete topological knowledge of the \textit{future network}. We propose a greedy and a \textit{Reverse Reachable} (\textsc{rr}) \textit{set}-based algorithms called \textsc{proteus-genie} and \textsc{proteus-seer}, respectively, to address this problem. Our algorithms utilize the state-of-the-art \textit{Forest Fire Model} for modeling network evolution during influence propagation to find superior quality seeds. Experimental study on real and synthetic social networks shows that our proposed algorithms consistently outperform state-of-the-art classical \textsc{im} algorithms with respect to seed set quality.
\end{abstract}

%
\vspace{0ex}\section{Introduction}\label{sec:intro}
With the emergence of large-scale online social networking applications in the last decade, \textit{influence maximization} in online social networks has been widely considered as one of the fundamental and popular problems in social data management and analytics. In the seminal paper by Kempe \etal~\cite{Kempe2003}, this problem is defined as follows. Given a social network $G$ as well as an \textit{influence propagation} (or \textit{cascade}) model, the problem of \textit{influence maximization} (\textsc{im}) is to find a set of initial users of size $k$ (referred to as \textit{seeds}) so that they eventually influence the largest number of individuals (referred to as \textit{influence spread}) in $G$. Effective solutions to the \textsc{im} problem open up opportunities for commercial companies to design intelligent recommendation systems and viral marketing strategies~\cite{Shirazipourazad:2012:IPA:2396761.2396837}.

Kempe \etal~\cite{Kempe2003} proved that the \textsc{im} problem is NP-hard, and presented an elegant greedy approximate algorithm applicable to several popular cascade models, including the \textit{independent cascade} (\textsc{ic}) model, and etc. A key strength of this algorithm lies in its guarantee that the influence spread is within $(1-1/e)$ of the optimal influence spread where $e$ is the base of the natural logarithm. Since then a large body of work (\eg~\cite{effMax_KDD09_Chen,GLL11,DBLP:conf/sigmod/NguyenTD16,DBLP:journals/pvldb/HuangWBXL17}) have been proposed to improve the efficiency of \textsc{im} techniques as well as quality of influence spread. Variants of this classical \textsc{im} problem have also been proposed in recent times such as topic-aware \textsc{im}~\cite{DBLP:journals/pvldb/ChenFLFTT15}, conformity-aware \textsc{im}~\cite{Li:2013:CCG:2452376.2452415}, and competitive \textsc{im}~\cite{Carnes:2007:MIC:1282100.1282167}. In a latest research, ~\cite{DBLP:conf/sigmod/AroraGR17} provides a uniform benchmark to evaluate these classical \textsc{im} solutions. In summary, this elegant work by Kempe \etal has had significant influence on the research community\footnote{ This work has garnered over 4,800 citations in \textit{Google Scholar} and received the test-of-time award in \texttt{ACM KDD 2014.}}.

\vspace{0ex}\subsection{Limitations of the Definition of Classical IM Problem} \label{ssec:def}
The classical \textsc{im} problem and its solution in~\cite{Kempe2003} are grounded on the following implicit assumption.  Assume that it takes $t$ time for influence spread from seed set $S$ to reach the largest number of nodes in a social network $G$. Then, $t$ is assumed to be small so that the topology of $G$ can be assumed to remain static during $t$. Consequently, the topology of $G$ is completely known during the propagation process. This is important as the dynamics of influence propagation for all cascade models in the classical \textsc{im} problem demands that neighbors of a node $v$ are known. For example, consider the popular \textsc{ic} model. In this model, we start with an initial set of active nodes, and the influence propagation unfolds in discrete steps where at step $i$ when a node $v$ becomes active, it gets a single chance to activate each of its inactive neighbor $w$ with probability $p$. If $v$ is successful in activating $w$, then $w$ will become active in step $i+1$. This process continues until no more activations are possible. Clearly, successful realization of this propagation process requires that the neighbors of each node are known so that influence can be propagated to its active neighbor(s).

A large volume of subsequent work on influence maximization (\eg~\cite{effMax_KDD09_Chen,GLL11,DBLP:journals/pvldb/HuangWBXL17,Li:2013:CCG:2452376.2452415}) also implicitly or explicitly make the above assumption as they are built on top of the classical \textsc{im} problem~\cite{Kempe2003}. Unfortunately, recent studies reveal that the aforementioned assumption may not hold in practice as time taken for influence propagation is significant, during which the topology of these networks evolves rapidly. For instance, ~\cite{PhysRevLett.103.038702} tested the spread of web advertisements through emails and websites and justified that on average it takes 1.5 days for an intermediary node to propagate the messages and the spread will not reach the largest scope until at least 8 propagations. That is, each cascade of web advertisement may consume up to two or more weeks. Meanwhile, it has been reported that active users of \textit{Facebook} increased from just a million in 2004 to 1 billion in 2012, 8.57\% growth per month on average~\cite{facebook1b}. Similarly, the number of active users in \textit{Twitter} increased from 100 million in September 2011 to 200 million in December 2012, 4.73\% growth per month~\cite{twitter2m}. In particular, it has been shown in~\cite{DBLP:conf/kdd/LeskovecBKT08} that the number of nodes for \emph{Answers}, \emph{Delicious} and \emph{LinkedIn} grows quadratically to the elapsed weeks; and for \emph{Flickr}, it grows exponentially.  In summary, the above studies show that \textit{influence propagation can take significant amount of time to reach the largest scope (several weeks) during which social networks evolve. }

Due to the aforementioned mismatch between the characteristics of real-world social networks and assumption made by the classical \textsc{im} problem, the quality of seeds selected by a state-of-the-art \textsc{im} algorithm is adversely impacted. In particular, the seeds $S$ selected from $G$ may not maximize the spread of influence due to the evolutionary nature of $G$ during influence propagation process. To elaborate further, suppose the influence propagation of $S$ takes $t^\prime$ time and terminates when there is no other node that can be activated. Meanwhile, the social network $G=(V,E)$ at time point $t$ evolves to $G^\prime=(V^\prime,E^\prime)$ during $t^\prime$. Note that the classical \textsc{im} problem aims to compute $S$ from $G$ at $t$ ignoring its evolution during $t^\prime$. Importantly, $S$ may not exhibit the maximal expected influence in $G^\prime$. That is, the seed set $S^\prime$ in $G^\prime$ may not necessarily be identical to $S$. Note that it is not possible in practice to run a state-of-the-art \textsc{im} algorithm on $G^\prime$ at time $t$ to get $S^\prime$ and then select $k$ seeds from $S^\prime \cap V$ as $S$, which is the ``best'' seed set in $G$ that exhibits the maximal influence in $G^\prime$. \textit{This is because it is unrealistic to assume that the complete topology of $G^\prime$ is known at time $t$}. We further illustrate this problem with the following example.

\begin{figure}[t]
\centering
\epsfig{file=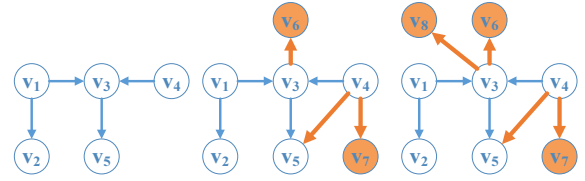, width=0.85\linewidth}
\vspace{0ex}\caption{(a) The original network $G$ at time $t$. (b) The network $G^\prime$ at time $t^\prime$. (c) The network $G^{\prime\prime}$ at time $t^{\prime\prime}$.}\label{fig:fig1}
\hrule
\vspace{0ex}\end{figure}
\begin{example}
Suppose we wish to select one seed ($k=1$) at time $t$ on the social network $G$ depicted in Fig.~\ref{fig:fig1}(a). A state-of-the-art \textsc{im} algorithm will run on $G$ to select $S=\{v_1\}$ as the seed. As influence propagation may take $t^\prime$ time, assume that during this period $G$ evolves to $G^\prime$ as shown in Fig.~\ref{fig:fig1}(b). Specifically, two new nodes (\ie $v_6$ and $v_7$) and three edges (\ie $\overrightarrow{v_4v_5}$, $\overrightarrow{v_4v_7}$ and $\overrightarrow{v_3v_6}$) are added during this time period.  Consequently, $v_1$ may not influence the most number of nodes after the completion of influence propagation as during this period the topology of $G$ has evolved to $G^\prime$. In fact, $v_4$ is a better choice as it may influence more nodes than $v_1$ after $t^\prime$.

At first glance, it may seem that this problem can be easily addressed by running the \textsc{im} technique on $G^\prime$ instead of $G$, which may result in seed set $S=\{v_4\}$. Unfortunately, \textit{it is difficult at time $t$ to predict the topology of $G$ after time $t^\prime > t$ (\ie $G^\prime$) in order to run an \textsc{im} technique on the latter!} In other words, the complete topology of the network at time $t^\prime$ is unknown at time $t$. Observe that this problem occurs regardless of \textit{when} the \textsc{im} algorithm is run. For instance, suppose it is run at time $t^\prime$ on $G^\prime$ to select $S=\{v_4\}$. However, now $G^\prime$ may have evolved to $G^{\prime\prime}$ (Fig.~\ref{fig:fig1}(c)) during the influence propagation process and as a result $v_4$ may not be the optimal seed for maximizing influence anymore.

Fundamentally, the manifestation of this problem is due to the maximization of influence on a network instance $G$ at time $t$ (which is consistent with the classical \textsc{im} problem  definition) instead of discovering seeds that maximize influence on a \textit{future instance} of the network (\ie $G^\prime$ at time $t^\prime$) assuming that influence propagation takes $t^\prime > t$ time.  However, as remarked earlier it is difficult to know the exact topology of the future network $G^\prime$ at time $t$.
\end{example}

\vspace{0ex}\subsection{Can Recent IM Efforts on Dynamic Networks Address the Limitations?} \label{sec:dyn}\vspace{0ex}
Recently, several efforts have studied the \textsc{im} problem in the context of dynamic or temporal social networks~\cite{DBLP:conf/icdm/ZhuangSTZS13,DBLP:conf/sdm/ChenSHX15,DBLP:journals/pvldb/OhsakaAYK16,DBLP:journals/pvldb/WangFLT17,DBLP:journals/ton/TongWTD17}. At first glance, it may seem that the aforementioned problem of classical \textsc{im} can be addressed by deploying these techniques as they consider evolutionary nature of the underlying network. \textit{Unfortunately, this is not the case as these techniques either assume that the topology of the network is completely known at a specific time point or are oblivious to the impact of influence propagation time on the network state.} Broadly, these techniques repeatedly run classical \textsc{im} algorithms (or their incremental versions) at different time points in order to find up-to-date seed sets. However, this strategy cannot address the aforementioned limitation of the classical \textsc{im} problem regardless of the way time points are separated or the choice of \textsc{im} algorithm. For instance, suppose $G_i$ at time $t_i$ evolves to $G_j$ at time $t_j$ where $t_j>t_i$. Intuitively, one may select $S_i$ at time $t_i$ in order to maximize the influence at a different temporal state. However, $S_i$ can only assure that the influence is maximized in $G_i$. As remarked earlier, as influence propagates from $S_i$, the network evolves as well. Consequently, whenever influence of $S_i$ reaches the largest scope, the network may have evolved from $G_i$ to $G_j$. Hence, repeatedly selecting seeds using a conventional \textsc{im} algorithm or its variant cannot lead to a superior quality seed set that maximizes influence at a future time point.

As an example, consider the \textit{MaxG} algorithm in~\cite{DBLP:conf/icdm/ZhuangSTZS13}, which ignores the impact of influence propagation time.\eat{ Consequently, although the authors claim that seeds are selected to maximize the influence at any temporal state of the network, the network state when the seeds are selected (\eg $G$ at $t$ in Fig.~\ref{fig:fig1}(a)) is not the one when influence eventually propagates (\eg $G^\prime$ at $t^\prime$ in Fig.~\ref{fig:fig1}(b)).} In other words, it assumes that the topology of the whole network can be easily observed at any timestamp, which is consistent with the assumption made by classical \textsc{im} as discussed earlier. It is worth noting that if the time consumed by the influence propagation process is not ignored, the probability of update operation has to be decayed even if the marginal gain increases. This is because the later a new node is interchanged into $S$, the lesser is the time available for it to propagate to the influence scope as expected. In fact, it has been argued in~\cite{DBLP:conf/cosn/GayraudPT15}, selecting same seeds at different timepoints may result in different influence spread in a dynamic network. \eat{The same problems also persist in~\cite{DBLP:conf/sdm/ChenSHX15}. In summary, as we shall see later, these issues lead to suboptimal seed set selection in~\cite{DBLP:conf/icdm/ZhuangSTZS13}.} The approach in~\cite{DBLP:conf/sdm/AggarwalLY12} firstly separates a time period into several equal-length intervals (of length $h$) based on the \textit{entire} evolution period of the network. Then, an algorithm is presented to select seeds $S$ at $t$ in order to maximize the influence at $t+h$. Unfortunately, it demands the future network state as input in order to compute $h$. For instance, in order to select $S$ in $G$ at time $t$ in Fig.~\ref{fig:fig1}(a), it requires $G^{\prime\prime}$ at time $t^{\prime\prime}$ (Fig.~\ref{fig:fig1}(c)) as input. Obviously, it is unrealistic to assume that $G^{\prime\prime}$ is completely known at $t$.

\vspace{0ex}\subsection{Contributions} \label{ssec:contri}
This paper makes four contributions. First, we theoretically prove that if the aforementioned assumption related to evolutionary nature of a social network and the impact of influence propagation time on seed selection is jettisoned by the classical \textsc{im} problem, then the approximation guarantee for greedy algorithms that the influence spread is within $(1-1/e)$ of the optimal influence spread does not hold anymore. Note that a large number of subsequent work~\cite{effMax_KDD09_Chen,Li:2013:CCG:2452376.2452415,DBLP:conf/cikm/MaYLK08} have used this guarantee as the building block to design new algorithms and derive new results.

Second, we revisit the classical \textsc{im} problem and \textit{redefine} it as \textsc{proteus-im}\footnote{ The name honors \textit{Proteus},  a sea god in Greek mythology, noted for his ability to assume different forms and to prophesy.  The \textsc{proteus-im} problem discovers seeds from a social network that assumes different form from the current instance at the end of the influence propagation process.} (\textbf{Pro}pagation \textbf{T}im\textbf{e}-conscio\textbf{us} \textbf{I}nfluence \textbf{M}aximization) problem by jettisoning the aforementioned assumption made by the former. Intuitively, it is defined as follows. Given a network $G_0=(V_0,E_0)$ at time $t_0$ that may evolve to $G_r=(V_r,E_r)$ at \textit{target time} $t_r$, the goal of the \textsc{proteus-im} problem is to select seeds $S\subseteq V_0$ at time $t_0$ such that information spread from $S$ can reach the largest scope in $G_r$  instead of $G_0$\footnote{ We shall elaborate in Section~\ref{sec:probstat} the justification for choosing seed set from $G_0$ and not solely from $G_r$.}.

Observe that the \textsc{proteus-im} problem differs from the classical \textsc{im} in the following ways. Firstly, we assume that the underlying network evolves during the influence propagation time $t_r$ and the complete topology of the \textit{target network} $G_r$ is unknown at time $t_0$. Secondly, the seeds are selected in a network ($G_0$) whose topology is not identical to the one in which influence finally propagates to the largest scope ($G_r$). In comparison, in the classical \textsc{im} problem these two network topologies are assumed identical. Thirdly, the influence propagation path in our problem may consists of nodes and edges that are currently absent in $G_0$. In comparison, the influence propagation path in classical \textsc{im}, although randomly distributed, strictly sampled from the edges in the current network. We also prove that the \textsc{proteus-im} problem is NP-hard and the expected influence is submodular.

Third, we propose a greedy algorithm called \textsc{proteus-genie} to address the \textsc{proteus-im} problem. Specifically, it selects $k$ nodes at time $t_0$ whose expected influence at time $t_r$ is maximal. A distinguishing feature of the algorithm is that it takes into account evolution of the underlying network during influence propagation process. Note that this is a challenging problem as we cannot make unrealistic assumption that the topology of the network at $t_r$ is completely known \textit{apriori}. To tackle this challenge, we resort to a popular network evolution model called the Forest Fire Model (\textsc{ffm})~\cite{DBLP:conf/kdd/LeskovecKF05} to predict the topology of the network at time $t_r$\footnote{ As our approach is loosely-coupled with the network evolution model, other models can also be adopted in this regard.}. To the best of our knowledge, \textsc{ffm} has never been utilized in the context of \textsc{im}. Specifically, \textsc{proteus-genie} iteratively \textit{selects} nodes with largest marginal gain in expected influence, taking into account the evolution of the network predicted by \textsc{ffm}. The proposed greedy algorithm can be time consuming for large networks as in each iteration we need to simulate network evolution and then select the next optimal seed node. Hence, we propose a Reverse Reachable (\textsc{rr}) set-based algorithm called \textsc{proteus-seer} which significantly reduces the running time while preserving similar influence spread quality. It first selects an \textit{instance number} $\theta$ by utilizing a recent classical \textsc{im} technique~\cite{DBLP:conf/sigmod/TangSX15} and then iteratively predict $\theta$ instances of the target network, $G^\prime_1, \ldots, G^\prime_\theta$. We \textit{select} candidate seeds from each $G^\prime_i$ and aggregate them to finally select the top-$k$ seeds.

Fourth, we investigate the performance of \textsc{proteus-genie} and \textsc{proteus-seer} on real-world social networks. Our experimental study reveals that, as predicted by theory, algorithms designed for the \textsc{proteus-im} problem consistently outperform state-of-the-art classical \textsc{im} techniques in terms of influence spread quality for all datasets, even when the underlying network has changed slightly during influence propagation (\ie $t_r$ may be small). Interestingly, our results emphasize that it is not necessary to possess a complete and accurate knowledge of the topology of $G_r$ to achieve such superior performance. Note that this is important as assuming such complete knowledge renders the \textsc{im} problem unrealistic. Additionally, \textsc{proteus-seer} significantly reduces the running time while preserving similar result quality.

\begin{table}[t]

\centering
\caption{Key notations used in this paper.} \label{T:symbol_definition}
\begin{tabular}{||l|p{63mm}||}
\hline
\textit{Symbol} & \textit{Definition} \\
\hline
$G_i=(V_i,E_i)$ & A social network at time $t_i$ \\
$k$ & The number of seeds \\
$S$ & Seeds set \\
$p$ & Independent cascade probability\\
$\sigma(\cdot)$ & The expected influence function in static network\\
$\sigma_{X_t}(\cdot)$ & The number of influenced node through edges $X_t$\\
$\sigma(\cdot,t)$ & The expected influence of dynamic network at time $t$\\
$\alpha$ & Forward burning probability \\
$\gamma$ & Backward burning ratio \\
$R$ & Number of rounds of simulation in computing expected influence\\
$I$ & Number of rounds for simulating the evolution\\
$\theta$ & Number of predicted instances in \textsc{proteus-seer}\\
$R(v,G)$ & The set of nodes in $G$ that can reach $v$\\
\hline
\end{tabular}
\vspace{0ex}\end{table}

\vspace{0ex}\subsection{Paper Organization}
The rest of this paper is organized as follows. We review classical \textsc{im} techniques in Section~\ref{sec:revistIM}. We formally define the \textsc{proteus-im} problem in Section~\ref{sec:probstat}. Sections~\ref{sec:greedy} and~\ref{sec:heur} present the \textsc{proteus-genie} and \textsc{proteus-seer} algorithms to address this problem. We present the experimental results in Section~\ref{sec:exp}. Finally, we conclude this paper in the last section.
\vspace{0ex}\section{Classical Influence Maximization Problem}\label{sec:revistIM}
In this section, we review related work in classical influence maximization (\textsc{im}) problem for both static and dynamic networks. Table~\ref{T:symbol_definition} describes the key notations used in this paper.
\vspace{0ex}\subsection{IM in Static Networks}
Kempe \etal~\cite{Kempe2003} are the first to consider choosing the seeds for \textsc{im} problem as a discrete optimization problem. In their seminal paper, they defined the classical influence maximization problem as follows.

\vspace{0ex}\begin{definition} \label{def:im} \textbf{[Classical Influence Maximization \linebreak Problem]}
\emph{Let $G=(V,E)$ be a network and $\sigma(\cdot)$ be the expected influence of a set of nodes under a given cascade model, measured by the number of nodes that are eventually influenced. Then given a budget $k$, the influence maximization (\textsc{im}) problem aims to select a seed set $S\subset V$ ($|S|=k$) such that the expected influence spread $\sigma(S)$ is maximized, which can be formally described as} $S=\mathop{\arg \max}\limits_{A\subseteq V, |A|=k}\sigma(A).$
\end{definition}

Kempe et al.~\cite{Kempe2003} proposed a general greedy algorithm that returns near optimal results (\ie within $1-1/e$). Since then a large body of \textsc{im} techniques~\cite{Kempe2003,effMax_KDD09_Chen,Li:2013:CCG:2452376.2452415,DBLP:conf/sigmod/TangSX15,irie}  are reported in the literature to improve efficiency, scalability, and influence spread quality. \eat{Furthermore, this classical \textsc{im} problem has been extended to address topic-aware \textsc{im}~\cite{DBLP:journals/pvldb/ChenFLFTT15}, conformity-aware \textsc{im}~\cite{Li:2013:CCG:2452376.2452415}, and influence maximization in competitive networks~\cite{Carnes:2007:MIC:1282100.1282167,Bharathi07competitiveinfluence}.} As highlighted in Section~\ref{sec:intro}, the classical \textsc{im} algorithms assume that the topology of the underlying network is completely known and it does not evolve during influence propagation. Hence, they suffer from the limitations discussed earlier leading to relatively poorer quality of influence spread (detailed in Section~\ref{sec:exp}).

\begin{observation}\label{thm:result}
The seed set $S\subseteq V$ in classical \textsc{im} problem (Definition~\ref{def:im}) may not exhibit the largest expected influence when the underlying network evolves during influence propagation. In other words, $\exists S^\prime\subseteq V$, where $|S^\prime|=|S|=k$, $S^\prime\ne S$, such that $\sigma(S)<\sigma(S^\prime)$.
\end{observation}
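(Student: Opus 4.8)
The plan is to establish this as a pure existence statement, so it suffices to exhibit a single concrete counterexample rather than argue over all networks. The network already introduced in Fig.~\ref{fig:fig1} is exactly such an instance, so I would fix the \textsc{ic} model with propagation probability $p$, set $k=1$, and reason about the expected influence $\sigma(\cdot)$ evaluated both before and after the underlying network evolves. The one point of care in reading the statement is that, although $\sigma(\cdot)$ is defined for a \emph{static} network, the inequality $\sigma(S)<\sigma(S')$ is to be understood as both quantities measured on the \emph{evolved} instance; I would make this explicit at the outset.

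First I would compute (or bound) $\sigma(v)$ for every candidate node $v\in V$ in the static network $G$ of Fig.~\ref{fig:fig1}(a). For a single seed under the \textsc{ic} model this reduces to summing, over each other node $w$, the probability that $w$ is reachable from $v$ in a live-edge realization, which for the small topology in the figure is a finite sum over the few directed paths. This step confirms that classical \textsc{im} returns $S=\{v_1\}$, \ie that $v_1$ is the node of maximum expected influence in $G$.

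Next I would turn to the evolved network $G'$ of Fig.~\ref{fig:fig1}(b), in which the nodes $v_6,v_7$ and the edges $\overrightarrow{v_4v_5}$, $\overrightarrow{v_4v_7}$ and $\overrightarrow{v_3v_6}$ have appeared. The key observation is that these new edges enlarge the set of nodes reachable from $v_4$ substantially more than they enlarge the reachable set of $v_1$. Recomputing $\sigma(\cdot)$ in $G'$ (again a finite sum over directed paths), I would show $\sigma(\{v_4\})>\sigma(\{v_1\})$. Setting $S'=\{v_4\}$ then gives a set with $|S'|=|S|=k$, $S'\neq S$, and $\sigma(S)<\sigma(S')$, which is precisely the claimed inequality, and completes the argument.

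The main obstacle is ensuring that the strict inequality is \emph{robust} rather than an artifact of a single lucky choice of $p$, since the ordering of $\sigma(\{v_1\})$ and $\sigma(\{v_4\})$ depends on $p$ through the path-probability sums. I would address this in one of two ways: (i) pin down an explicit value of $p$ (\eg the uniform $p$ used elsewhere in the paper) and verify the inequality directly; or (ii) give a structural argument that, because under evolution $v_4$ acquires strictly more reachable descendants than $v_1$, there is a nonempty range of $p\in(0,1)$ over which $v_4$ overtakes $v_1$. Option (ii) is cleaner because it avoids enumerating live-edge realizations and, more importantly, makes clear that the phenomenon is generic rather than incidental to one topology.
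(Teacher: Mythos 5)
Your proposal is correct and takes essentially the same route as the paper: the paper never gives a formal proof of this Observation, but justifies it precisely by the Fig.~\ref{fig:fig1} example you formalize (classical \textsc{im} selects $S=\{v_1\}$ on $G$, while $S^\prime=\{v_4\}\subseteq V$ exhibits larger expected influence once $G$ evolves to $G^\prime$), and your existence-by-counterexample reading of the ``may not'' claim is the intended one. The only refinement worth adding to your option (ii) is that the chosen $p$ must \emph{simultaneously} keep $v_1$ optimal in $G$ and $v_4$ ahead in $G^\prime$; since $v_1$ has the strictly largest reachable set in $G$ and $v_4$ the strictly largest in $G^\prime$, both orderings hold as $p\to 1$, hence on a common nonempty interval of $p$.
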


\begin{theorem}~\label{thm:grtfl} \emph{The approximation guarantee  that the influence spread is within $(1-1/e)$ of the optimal influence spread for greedy hill-climbing-based classical \textsc{im} algorithms does not hold when the underlying network evolves during influence propagation}.
\end{theorem}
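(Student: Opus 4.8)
The plan is to isolate exactly \emph{which} ingredient of the classical argument breaks. Recall that the $(1-1/e)$ bound for greedy hill-climbing rests on two facts: that $\sigma(\cdot)$ is monotone and submodular on a \emph{fixed} graph, and that greedy maximization of a monotone submodular function returns a set whose value is within $(1-1/e)$ of the optimum \emph{of the same objective on the same graph} (the Nemhauser-type guarantee). The crucial observation is that hill-climbing executed at time $t_0$ can only evaluate marginal gains on the network it sees, namely $G_0$, and therefore optimizes $\sigma(\cdot)$ on $G_0$; yet the influence that is actually realized is $\sigma(\cdot,t_r)$ on the evolved network $G_r$. Consequently the inherited guarantee certifies only that $\sigma(S)\ge(1-1/e)\max_{|A|=k}\sigma(A)$ \emph{on $G_0$}, and says nothing about the ratio $\sigma(S,t_r)/\max_{|A|=k}\sigma(A,t_r)$ that the theorem concerns. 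I would therefore prove the statement by exhibiting a family of instances on which this latter ratio drops below $1-1/e$.

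For the construction I would parameterize the instance of Fig.~\ref{fig:fig1}. Fix $k=1$ and let $G_0$ contain a node $v_1$ with the largest current out-neighborhood, so that greedy is forced to select $S=\{v_1\}$, together with a second node $v_4$ whose out-degree in $G_0$ is small. I would then let the evolution attach a growing number $N$ of fresh nodes and edges to $v_4$ while leaving the neighborhood of $v_1$ untouched --- a pattern consistent with a growth model such as \textsc{ffm}, under which recently connected hubs keep recruiting new nodes. In the resulting $G_r$, a cascade started at $v_4$ reaches $\Theta(N)$ nodes whereas one started at $v_1$ reaches only $O(1)$.

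Under the \textsc{ic} model I would then bound both quantities: the greedy output yields $\sigma(S,t_r)=O(1)$, while the optimum on the target network satisfies $\max_{|A|=1}\sigma(A,t_r)\ge\sigma(\{v_4\},t_r)=\Theta(N)$. Hence $\sigma(S,t_r)/\max_{|A|=1}\sigma(A,t_r)\to 0$ as $N\to\infty$, so for $N$ large enough the realized influence of the greedy seeds falls strictly below $(1-1/e)$ of the optimum on $G_r$, contradicting the asserted guarantee. I would emphasize that the failure is \emph{not} caused by any loss of submodularity --- indeed $\sigma(\cdot,t_r)$ remains monotone submodular, as established separately in this paper --- but purely by the mismatch between the graph on which greedy optimizes ($G_0$) and the graph on which influence is measured ($G_r$); the Nemhauser-type bound is simply certifying near-optimality for the wrong objective.

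The main obstacle, I expect, is making the separation watertight rather than merely plausible. Concretely, I must (i) verify that $v_1$ genuinely has the strictly largest marginal gain on $G_0$, so that greedy is forced to pick it; (ii) compute, or tightly two-side bound, the expected spreads $\sigma(\{v_1\},t_r)$ and $\sigma(\{v_4\},t_r)$ under the \textsc{ic} dynamics on the randomly evolved graph, which requires controlling the probability that the evolution actually delivers the $\Theta(N)$ new neighbors to $v_4$; and (iii) ensure the growth pattern is realizable by the assumed evolution model with non-negligible probability, so that the counterexample reflects admissible dynamics rather than an adversarially hand-built graph. Once these are pinned down, the ratio crossing the $1-1/e$ threshold follows immediately.
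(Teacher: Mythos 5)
Your proposal is correct and, at the top level, takes the same route as the paper: both prove the theorem by exhibiting an explicit evolution scenario on which the seeds that greedy selects on $G_0$ achieve strictly less than $(1-1/e)$ of the optimal spread on $G_r$. The constructions differ in substance, though. The paper starts from a $G_0$ with $n$ nodes and a single edge and lets the network evolve by repeatedly spawning disconnected isomorphic copies of $G_0$, so that $G_r$ consists of $r+1$ copies; greedy's seeds, all trapped in the original copy, realize at most $k+1$, whereas placing one seed in each of $k$ copies realizes $2k$, and for $k\ge 4$ one has $k+1<(1-1/e)\,2k$. This yields a bounded violation, needs $k\ge 4$, implicitly treats propagation as (near-)deterministic (with IC probability $p$ the inequality $k+p<(1-1/e)k(1+p)$ forces $p\gtrsim 0.58$), and its optimum uses nodes that do not exist in $V_0$. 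Your construction instead keeps $k=1$, attaches $N$ new out-neighbors to a node $v_4\in V_0$ that greedy rejects, and drives the ratio $\sigma(S,t_r)/\max_{|A|=1}\sigma(A,t_r)$ to $0$ as $N\to\infty$: an unbounded violation that works for every fixed $p>0$, and whose optimal seed lies in $V_0$, so it refutes the guarantee even under the paper's own \textsc{proteus-im} restriction that seeds be chosen from the current network --- something the paper's example does not do, since there the $V_0$-restricted optimum is essentially what greedy finds. One correction to your list of obstacles: item (iii), realizability under \textsc{ffm} with non-negligible probability, is not needed. The theorem quantifies existentially over evolution scenarios --- the paper's own proof opens with ``we can easily design a network evolution scenario'' --- so you may simply specify the evolution deterministically (the $N$ new out-neighbors of $v_4$ are present by $t_r$), which also collapses your obstacle (ii) to the exact computations $\sigma(\{v_1\},t_r)=O(1)$ and $\sigma(\{v_4\},t_r)=1+pN$, and makes obstacle (i) trivial by taking $v_1$ to be the center of a star in $G_0$.
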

\begin{IEEEproof}
We can easily design a network evolution scenario justifying that $1-1/e$ guarantee does not hold eventually. Without loss of generality, suppose $G_0$ consists of $n$ nodes and only one edge. Let $G_0$ evolve in the following way, at each step, it replicate a copy of $G_0$. For instance, $G_1$ consists of $G_0$ and $G_0^\prime$, which are isomorphic but disconnected with each other; $G_2$ consists of $G_1$ and $G_0^\prime$, and etc. Thus, $G_r$ ($r>k$) will consist of $r+1$ disconnected copies of $G_0$. Suppose we are maximizing the influence spread using greedy hill-climbing-based classical \textsc{im} algorithms in $G_0$. According to the quality guarantee, $\sigma(S_k,r)\ge(1-1/e)\sigma(O_k^0,0)$, and obviously $\sigma(O_k^0,0)=k+1$. Therefore $(1-1/e)(k+1)\le\sigma(S_k,r)\le k+1$. However, we can easily find that $\sigma(O_k^r,r)=2k$, which can be achieved by selecting $1$ seed from each of $k$ disconnected copies of $G_0$ in $G_r$. Obviously, $\forall k\ge 4$, $k+1<(1-1/e)2k$, that is, $\sigma(S_k,r)<(1-1/e)\sigma(O_k^r,r)$.
\end{IEEEproof}

\vspace{0ex}\textbf{Remark.} Kempe \etal showed in~\cite{Kempe2003} that a \textit{non-progressive} (\ie nodes can switch from inactive to active state and vice versa) \textsc{im} problem can be reduced to a \textit{progressive} (\ie nodes can only switch from inactive to active state but not vice versa) case in a different graph. Unfortunately, when the underlying network evolves during influence propagation, the \textsc{im} problem cannot be transformed to a non-progressive case (and subsequently to progressive case) due to the following reasons. They designed a new concept, namely \emph{layered graph}, which is defined as follows. Given a graph $G=(V, E)$ and time limit $\tau$, a layered graph $G^\tau$ on $\tau\dot|V|$ contains a copy $v_t$ for each node $v$ in $G$ and each time step $t \leq \tau$. Firstly, in a non-progressive case, no matter how many layers in the \textit{layered graph} $G^\tau$, the topology of the network $G=(V,E)$ in each layer is completely known. However, in reality a social network does not satisfy this property. That is, if we model the evolving network into a layered graph, then the topology of the network in each layer is unknown and these networks in different layers are different. Secondly, the influence in non-progressive case is measured by the sum over the number of time steps that all nodes $v\in V$ are active. However, in the presence of evolution it cannot be measured in this way. On one hand, $V$ is not fixed in our problem setting; on the other hand, the influence should be measured as the number of active nodes at a target time $t_r$ (\ie the end of propagation in progressive \textsc{im}) instead of summing over the number of steps the nodes are active.

\eat{
\begin{corollary}\label{col:grthd} Given a non-negative, monotone submodular function $\sigma(\cdot,r)$, let $S$ be a set of size $k$ obtained by selecting nodes one at a time, each time choosing the node that exhibits the largest marginal gain in the function. Let $O$ be the ideal set of nodes that maximizes $\sigma(O,r)$ over all possible set with $k$ nodes. Then $\sigma(S,r)\ge(1-1/e)\sigma(O,r)$. That is, $S$ provides a $(1-1/e)$ guarantee for the seed quality.\end{corollary}
\begin{proof}
Reconsider Theorem~\ref{thm:grtfl}. In this scenario $\delta_{i+1}$, the marginal gain between consecutive selection iterations, is now changed to $\max(\sigma(S_i,r)-\sigma(S_{i-1},r))$. Then all the steps in \textbf{Part I} of the proof in Theorem~\ref{thm:grtfl} hold. Consequently, $\sigma(S,r)\ge(1-1/e)\sigma(O,r)$.
\end{proof}
}
\eat{In comparison, we simulate the evolution and predict the future of the network evolution. Based on that, we propose a greedy approach that takes into account the randomly sampled possible future instances of target network and compute the corresponding expected influence for seeds over all the possible instances.}
\vspace{0ex}\subsection{IM in Dynamic Networks}
Recently, there have been increasing efforts to address the \textsc{im} problem in dynamic networks. Zhuang \etal~\cite{DBLP:conf/icdm/ZhuangSTZS13} proposed an algorithm called \textit{MaxG} to select seed nodes $S^t$ at a specific time step $t$. It utilizes a heuristic \textit{probing strategy} such that at a target time step, it only needs to probe a limited number of nodes, whose change in the local connections can best uncover the actual influence propagation process. As remarked in Section~\ref{sec:dyn}, it assumes the topology of the whole network can be easily observed at any timepoint. The same limitation also exists in~\cite{DBLP:conf/sdm/ChenSHX15}, which focuses on tracking influential nodes. More recently,~\cite{DBLP:journals/pvldb/OhsakaAYK16} proposed an index model using \textsc{rr} set introduced in~\cite{DBLP:conf/soda/BorgsBCL14} to answer influence maximization query at any temporal state during network evolution. Similar to~\cite{DBLP:conf/icdm/ZhuangSTZS13}, this work also suffers from two key drawbacks. Firstly, it assumes that every atomic evolution step (\eg single node/edge addition) can be fully observed at any timepoint, which is unrealistic in practice. Secondly, it ignores the influence propagation time and fails to anticipate the network state during influence propagation. Consequently, any answer of~\cite{DBLP:journals/pvldb/OhsakaAYK16} (\ie a set of seeds) towards an influence maximization query $q$ at time $t$ may not necessarily generate the expected influence cascade as the network, in which influence eventually propagates, is typically not the same with the one at $t$, based on which it answers $q$.

Aggarwal \etal~\cite{DBLP:conf/sdm/AggarwalLY12} studied the problem of selecting seed nodes $S$ at time $t$, such that a piece of information propagated from these nodes will spread to the largest scope (\ie number of nodes) at time $t+h$, taking into account that the network may evolve during the period from $t$ to $t+h$. However, as discussed in Section~\ref{sec:dyn}, it assumes that the complete topology of the final network where influence eventually propagates to the largest scope is known and seeds are selected from this ``known'' network.
\vspace{0ex}\section{Propagation Time-conscious Influence Maximization}\label{sec:probstat}
\eat{In the preceding sections, we have highlighted limitation of the classical definition of the \textsc{im} problem that may lead to suboptimal influence spread in real-world applications on static and dynamic networks.} In this section, we revisit this decade-old \textsc{im} problem and redefine it to address the aforementioned limitation. We begin by introducing some terminology that we shall be using in this paper. Then, we formally redefine the classical \textsc{im} problem as \textit{propagation time-conscious} \textsc{im} problem.

\vspace{0ex}\subsection{Terminology}
We model a social network as directed graph $G=(V, E)$, where nodes in $V$ represent individuals in the network and edges in $E$ represent relationships between them. The \textit{order} of $G$ is $|V|$ and its \textit{size} is $|E|$. Recall that traditional \textsc{im} assumes influence propagates between nodes according to a specific cascade model and selects $k$ nodes in $V$ as seeds to spread a piece of information such that the information will be propagated to the maximal number of other nodes. However, such influence propagation can take $t_r$ time in reality (which can be several weeks). During this time, the social network may evolve from $G_0=(V_0,E_0)$ at time $t_0$ to $G_r=(V_r, E_r)$ at time $t_r$. We refer to $G_0$ as \textit{current network} and $G_r$ as \textit{target network}. Correspondingly, $t_0$ and $t_r$ are referred to as \textit{current time} and \textit{target time}, respectively. For the sake of generality, we assume that $t_r$ is given by the user as it is application and network dependent. We assume $|V_r| > |V_0|$ and $|E_r| > |E_0|$ as most real-world social networks grow over time. Furthermore, $V_r \cap V_0 \neq \emptyset$  and $E_r \cap E_0 \neq \emptyset$.  We denote the expected influence at time $t$ (\ie the number of influenced nodes at $t$) for seeds $S$ under a given cascade model as $\sigma(S,t)$. For ease of exposition, in the sequel, we assume the independent cascade (\textsc{ic}) model, where influence propagates according to an independent probability $p_{ij}$ along any edge $\overrightarrow{v_iv_j}$, for influence propagation as it is one of the most popular model in the literature. However, our proposed problem is also applicable to other types of cascade models.

\vspace{0ex}\subsection{Redefining IM Problem} \label{ssec:strategy}
The classical influence maximization problem (Definition~\ref{def:im}) ignores the influence propagation time which can be significant in reality, during which the underlying social network may evolve. Hence, we formally redefine this classical influence maximization problem as follows.

\begin{definition} \label{def:dynim} \textbf{[Propagation Time-conscious Influence Maximization Problem]}
\emph{Let $G_0=(V_0,E_0)$ be the current network at time $t_0$ and $k$ be the budget. Let $t_r > t_0$ be the influence propagation time during when $G_0$ evolves to $G_r=(V_r, E_r)$ where $V_r \cap V_0 \neq \emptyset$  and $E_r \cap E_0 \neq \emptyset$. Then, the goal of \textbf{Pro}pagation \textbf{T}im\textbf{e}-conscio\textbf{us} \textbf{I}nfluence \textbf{M}aximization (\textsc{proteus-im}) Problem is to select a set of seed nodes $S \subseteq V_0$ ($|S|=k$) at $t_0$ such that the expected influence spread $\sigma(S,t_r)$ is maximized at $t_r$ assuming that the complete topology of $G_r$ is unknown at $t_0$. That is,} $S=\mathop{\arg \max}\limits_{A\subseteq{V_0}, |A|=k}{\sigma(A,t_r)}.$
\end{definition}

Observe that according to the above definition, seeds are selected from current instance of the network $G_0$ instead of future instances of the network \ie  $G_1, \ldots, G_r$. This is because it is difficult to know at $t_0$ which users may potentially join or leave a social network in the future (before $t_r$), how will they be connected to other users,  and whether they will be part of the seeds.  In fact, as remarked earlier,  it is unrealistic to assume accurate and complete topological knowledge of future instances of the social network (\ie $G_r$) at time $t_0$. Hence, given that influence propagation may take $t_r$ time, it is more realistic to choose a seed set $S\subseteq V_0$ (\ie users who currently exist in the social network) in order to maximize the expected influence spread in the target network $G_r$. Also, observe that in the classical \textsc{im} problem, $G_r = G_0$  as the topology of the network is assumed to be static since $t_r$ is negligible.

\begin{lemma}\label{the:subm}
\emph{The expected influence function at an arbitrary time $t$ for node set $S$ under the \textsc{ic} model, namely $\sigma(\cdot,t)$, defined in Definition~\ref{def:dynim} is sub-modular}.
\end{lemma}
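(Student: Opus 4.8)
The plan is to prove submodularity of $\sigma(\cdot,t_r)$ by reducing to the well-established submodularity of influence spread in a \emph{fixed} graph under the \textsc{ic} model, and then averaging over the randomness of network evolution. The key conceptual step is to recognize that $\sigma(S,t_r)$ is not defined on a single graph but is an expectation over two independent sources of randomness: (i) the random evolution of $G_0$ into a particular target instance $G_r$ (governed by the Forest Fire Model with parameters $\alpha,\gamma$), and (ii) the random outcomes of the \textsc{ic} propagation along the edges of that realized instance. Since nonnegative linear combinations (and hence expectations) of submodular functions are submodular, it suffices to show that the inner quantity is submodular \emph{for each fixed realization}.

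First I would fix an arbitrary realization of the evolved network, call it $g_r$, obtained by running \textsc{ffm} to time $t_r$; crucially $V_0\subseteq V(g_r)$ so any $S\subseteq V_0$ is a valid seed set in $g_r$. Next I would invoke the standard ``live-edge'' (coin-flipping) argument of Kempe \etal: in $g_r$, independently flip a biased coin for each edge $\overrightarrow{v_iv_j}$ with success probability $p_{ij}$, declaring the edge \emph{live} or \emph{blocked}. Conditioned on a fixed live-edge subgraph $X$, the number of nodes reachable from $S$, which is exactly $\sigma_X(S)$ in the paper's notation, is a coverage-type function and is submodular in $S$: adding a node to a larger set can only reach a subset of the newly-reachable nodes that adding it to a smaller set reaches. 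Averaging $\sigma_X(S)$ over the live-edge distribution on $g_r$ shows that the expected \textsc{ic} spread on the fixed instance $g_r$ is submodular. This is the classical result applied verbatim to one frozen graph.

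The remaining step is to take the expectation over the evolution randomness. Writing $\sigma(S,t_r)=\mathbb{E}_{g_r}\!\left[\text{(expected \textsc{ic} spread of $S$ in $g_r$)}\right]$, where the outer expectation is over the \textsc{ffm}-induced distribution on target instances, each inner term is submodular by the previous paragraph. Because submodularity is preserved under nonnegative weighted sums and under expectation, $\sigma(\cdot,t_r)$ is submodular. Concretely, for any $A\subseteq B\subseteq V_0$ and $v\in V_0\setminus B$, I would show the marginal-gain inequality $\sigma(A\cup\{v\},t_r)-\sigma(A,t_r)\ge\sigma(B\cup\{v\},t_r)-\sigma(B,t_r)$ by verifying it inside each fixed realization and then integrating, since the inequality survives taking expectations of both sides.

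The main obstacle, and the point requiring care rather than heavy calculation, is the coupling between the seed set and the evolution process: one must confirm that the distribution over target instances $g_r$ does \emph{not} depend on the chosen seed set $S$. This is legitimate here because network evolution under \textsc{ffm} is driven purely by topological growth of $G_0$ and is independent of which nodes are later activated as seeds — influence propagation and structural evolution are modeled as separate, independent processes. If evolution were seed-dependent, the outer expectation could not be pulled out uniformly across different seed sets and the averaging argument would collapse. I would therefore state explicitly that the \textsc{ffm} evolution and the \textsc{ic} live-edge coins are mutually independent and independent of $S$, so that the two-stage expectation factors cleanly and the per-instance submodularity lifts to $\sigma(\cdot,t_r)$.
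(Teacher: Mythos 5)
Your proof is correct and takes essentially the same route as the paper's: fix the random live-edge outcome to obtain a coverage-type function $\sigma_{X}(S)$, apply the standard Kempe et al.\ marginal-gain argument ($S\subseteq T$ implies the gain of adding $v$ to $S$ dominates the gain of adding $v$ to $T$), and lift submodularity through a non-negative combination of such functions. The only distinction is presentational: the paper compresses all randomness into a single distribution $Prob[X_t]$ over live-edge sets at time $t$, whereas you factor the expectation into two explicit stages (\textsc{ffm} realization of $G_r$, then \textsc{ic} coin flips within it) and explicitly note that the evolution distribution is independent of the chosen seed set --- a worthwhile clarification of rigor, but not a different argument.
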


\begin{IEEEproof}
Let $f_{X_t}(v)$ be the set of nodes that can be reached from $v$ on a path comprising of the live edges $X_t$ at time $t$ and $\sigma_{X_t}(A)$ be the number of nodes that can be reached from $A=\{v_0, \ldots, v_k\}$ through $X_t$. In other words, $\sigma_{X_t}(A)=\big|\bigcup_{v\in A}f_{X_t}(v)\big|$. Given two node sets $S\subseteq T$, consider the following expression: $\sigma_{X_t}(S\bigcup\{v\})-\sigma_{X_t}(S)$, which is the number of elements in $f_{X_t}(v)$ that are not already in $\bigcup_{u\in S}f_{X_t}(u)$. It is at least as large as the number of elements in $f_{X_t}(v)$ that are not in $\bigcup_{u\in T}f_{X_t}(u)$. Hence, $\sigma_{X_t}(S\bigcup\{v\})-\sigma_{X_t}(S) \geq \sigma_{X_t}(T\bigcup\{v\})-\sigma_{X_t}(T)$, which is submodular. Moreover, the expected influence of $S$ at $t$ for all possible $X_t$, \ie $\sigma(S,t)$, can be computed as $\sigma(S,t) = \sum_{X_t}Prob[X_t]\cdot\sigma_{X_t}(S).$ According to the equation, $\sigma(S,t)$ can be viewed as a non-negative linear combination of submodular functions, which is also submodular.
\end{IEEEproof}

\begin{lemma}\label{the:nph}
\emph{The \textsc{proteus-im} problem defined in Definition~\ref{def:dynim} under the \textsc{ic} model is NP-hard}.
\end{lemma}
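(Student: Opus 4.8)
The plan is to reduce the classical influence maximization problem (Definition~\ref{def:im}) to \textsc{proteus-im}. Since Kempe \etal~\cite{Kempe2003} established that classical \textsc{im} under the \textsc{ic} model is NP-hard, it suffices to show that any instance of classical \textsc{im} can be cast as an instance of \textsc{proteus-im}. The key observation driving the reduction is that \textsc{proteus-im} strictly generalizes classical \textsc{im}: the latter is precisely the special case in which the network does not evolve during propagation, \ie $G_r = G_0$. Concretely, given an arbitrary classical \textsc{im} instance $\langle G=(V,E), k, p\rangle$, I would construct the \textsc{proteus-im} instance $\langle G_0 = G, k, t_r\rangle$ and force the evolution to be trivial, so that the target network $G_r$ coincides with $G_0$.

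First I would set $G_0 := G$ and the budget to the same $k$. Then I would choose the target time $t_r$ (equivalently, the evolution process) so that the network undergoes no change between $t_0$ and $t_r$; that is, $V_r = V_0$ and $E_r = E_0$, which trivially respects the constraints $V_r \cap V_0 \neq \emptyset$ and $E_r \cap E_0 \neq \emptyset$ from Definition~\ref{def:dynim}. Under this degenerate evolution, the expected influence at the target time reduces to the static expected influence, \ie $\sigma(A, t_r) = \sigma(A)$ for every $A \subseteq V_0$. Consequently, the optimization objective $S = \arg\max_{A \subseteq V_0, |A|=k} \sigma(A, t_r)$ becomes exactly $S = \arg\max_{A \subseteq V, |A|=k} \sigma(A)$, which is the classical \textsc{im} objective. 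Hence a solution to the constructed \textsc{proteus-im} instance yields a solution to the original \textsc{im} instance, and the reduction is clearly polynomial-time (it is essentially the identity map on the input).

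The final step is to invoke the known NP-hardness of classical \textsc{im} under the \textsc{ic} model~\cite{Kempe2003}: since solving \textsc{proteus-im} on the constructed instance solves an arbitrary \textsc{im} instance, \textsc{proteus-im} is at least as hard, and therefore NP-hard. I expect the main subtlety — rather than a genuine obstacle — to lie in arguing that the degenerate (non-evolving) evolution is a \emph{legitimate} instance of \textsc{proteus-im} rather than an edge case excluded by the definition. One must verify that setting $G_r = G_0$ does not violate any modeling assumption (the definition only requires nonempty intersection of vertex and edge sets, which $G_r = G_0$ satisfies), and that the evolution model can in principle produce a static trajectory; once this is granted, the equality $\sigma(A, t_r) = \sigma(A)$ follows immediately and the reduction is complete.
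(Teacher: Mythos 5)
Your proposal is correct and follows essentially the same route as the paper: both arguments exhibit classical \textsc{im} as the special case of \textsc{proteus-im} in which the network stays static during propagation (\ie $G_r = G_0$), so that $\sigma(A,t_r)$ collapses to $\sigma(A)$ and NP-hardness is inherited from Kempe \etal~\cite{Kempe2003}. Your extra care in checking that the degenerate, non-evolving instance is admissible under Definition~\ref{def:dynim} is a point the paper glosses over, but it does not change the argument.
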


\begin{IEEEproof}
Consider the traditional \textsc{im} problem in Definition~\ref{def:im} which has been proved to be NP-hard~\cite{Kempe2003}. We show that this can be viewed as a special case of the \textsc{new-im} problem.

Given a network $G_0=(V_0,E_0)$ at $t_0$, suppose we are solving \textsc{new-im} over $G_0$ at $t_0$. If $G_0$ remains static for a sufficiently long period until the influence propagation ends at time $t_r$ (\ie the influence reaches the largest scope), in this case $G_0$ is the same with $G_r$ at $t_r$. Therefore, the \textsc{new-im} in $G_0$ is equivalent to \textsc{im} in $G_0$. That is, the problem of maximizing $\sigma(S,t_r)$ in $G_r$ degenerates to the problem of maximizing $\sigma(S)$ in $G_0$. Therefore, the \textsc{new-im} is at least as hard as \textsc{im}, which is NP-hard.
\end{IEEEproof}

Since the \textsc{proteus-im} problem is NP-hard, in the sequel we present two approximate solutions. It is worth emphasizing that given the rich body of work on classical \textsc{im} techniques, our design principle behind these solutions is not to jettison all these efforts but to leverage on the benefits of these techniques wherever possible, while bringing in novel ideas to address the aforementioned limitations of classical \textsc{im}.  Hence, our first solution is a greedy hill-climbing approach called \textsc{proteus-genie}. Our second solution, called \textsc{proteus-seer}, exploits Reverse Reachable (\textsc{rr}) set and is significantly more efficient than \textsc{proteus-genie} while preserving good result quality.

\vspace{0ex}\section{A Greedy Solution}\label{sec:greedy}
\eat{In the preceding section, we have revisited the classical \textsc{im} problem and redefined it to \textsc{proteus-im} problem where given a current network $G_0$ at time $t_0$, the goal is to select $k$ seeds in all $G_i$ ($i\le r$) such that the influence spread at target time $t_r$ is maximized. Fundamentally, in contrast to the classical \textsc{im} problem and its variants, in our problem setting we neither assume that the network is static nor its topology at time $t_r$ is known at time $t_0$.} In this section, we present a novel greedy algorithm called \textsc{proteus-genie} (\textbf{Pro}pagation \textbf{T}im\textbf{e}-conscio\textbf{us} \textbf{G}r\textbf{E}edy selectio\textbf{N} of \textbf{I}nfluential s\textbf{E}eds) that addresses the \textsc{proteus-im} problem.  Observe that designing such algorithm is challenging. On one hand, it is unrealistic to assume complete  knowledge of the topology of the target network $G_r$ at time $t_0$. On the other hand, without knowing the topology of $G_r$ it is very difficult to compute the expected spread in it using existing cascade models (\eg \textsc{ic} model).

We tackle this challenge by \textit{predicting} the expected topology of $G_r$ from $G_0$ by exploiting a popular network evolution model called the \textit{Forest Fire Model}~\cite{DBLP:conf/kdd/LeskovecKF05,DBLP:journals/csur/ChakrabartiF06}.  Consequently, we utilize this predicted topology of $G_r$ to determine the expected spread in it using  an existing cascade model. We begin by briefly introducing this model. Interestingly, as we shall see in Section~\ref{sec:exp}, by leveraging the predicted topology of $G_r$, our proposed algorithms can consistently produce superior quality seeds compared to classical \textsc{im} techniques. That is, we do not need to know the actual topology of $G_r$ at time $t_r$ to produce superior quality seeds!

\vspace{0ex}\subsection{Forest Fire Model (FFM)} \label{ssec:ffmodel}
Majority of social networks are evolutionary in nature and exhibit series of properties and phenomenons, including shrinking diameter, densification power law, etc~\cite{DBLP:conf/kdd/LeskovecKF05}. Several network evolution models~\cite{DBLP:conf/kdd/LeskovecBKT08,DBLP:conf/kdd/LeskovecKF05,DBLP:journals/csur/ChakrabartiF06,DBLP:conf/kdd/BackstromHKL06} have been proposed in the literature to simulate the evolution of real-world online social networks. Among these models, we chose the \textit{Forest Fire Model} (\textsc{ffm})~\cite{DBLP:conf/kdd/LeskovecKF05}, as it outperforms other models~\cite{DBLP:journals/csur/ChakrabartiF06}. Formally, this model is defined as follows.

\begin{definition} \label{def:ffm}
\textbf{[Forest Fire Model]} \emph{Let $G_t$ be a network at time $t$, $G_1$ consist of only the first node. Given an incoming node $v$ at time $t$, the network $G_{t-1}$ at time $t-1$ can be updated to $G_t$ according to the following rules}.
\begin{enumerate} 
  \item \emph{Uniformly select an ambassador node $w$ from $G_{t-1}$ and establish a directed edge from $v$ to $w$}, $\overrightarrow{vw}$.
  \item \emph{Sample two numbers $x$ and $y$, from a pair of binomial distributions whose means are $\alpha/(1-\alpha)$ and $\gamma\alpha/(1-\gamma\alpha)$, respectively. Afterwards, $v$ uniformly selects $x$ in-links and $y$ out-links incident to $w$, respectively. Let $w_1, w_2, \ldots, w_{x+y}$ be the other ends of the selected links. In particular, $\alpha$ is a preset \textbf{forward burning probability}, $\gamma$ is a preset \textbf{backward burning ratio} such that $\gamma\alpha$ is \textbf{backward burning probability}}.
  \item \emph{Establish directed edges from $v$ to $w_1, w_2, \ldots, w_x$, respectively. Similarly, establish directed edges from $w_{x+1}, w_{x+2}, \ldots, w_{x+y}$ to $v$, respectively. Then, we apply step (2) recursively for each of $w_1, w_2, \ldots, w_x$ until there is no new link to be added. As this process continues, nodes can only be visited once such that there is no cyclic sub-structure}.
\end{enumerate}
\end{definition}

It has been shown in~\cite{DBLP:conf/kdd/LeskovecKF05} that the network generated by \textsc{ffm} satisfies majority of real-world network properties, including not only static ones (\eg Heavy-tailed in-degrees and out-degrees~\cite{DBLP:journals/csur/ChakrabartiF06}) but also dynamic ones (\eg Densification Power Law and Shrinking Diameter~\cite{DBLP:conf/kdd/LeskovecKF05}). It has also been demonstrated in~\cite{DBLP:journals/csur/ChakrabartiF06} that evolutions of many real-world networks can be well simulated and predicted using this model. Therefore, we utilize \textsc{ffm} to predict the evolution of a network at target time $t_r$. Specifically, our \textsc{proteus-genie} algorithm \textit{integrates} the \textsc{ffm} with \textit{node selection} during influence maximization to facilitate discovery of superior quality seeds. We now elaborate on this algorithm in detail.

\vspace{0ex}\subsection{The \textsc{\large proteus-genie} Algorithm} \label{ssec:GffAlg}
The goal of the \textsc{proteus-genie} algorithm is to greedily select the nodes with the maximal marginal expected influence taking into account the evolution of the underlying network from time $t_0$ to $t_r$ by predicting its topology using \textsc{ffm}.

\begin{figure*}[t]
\centering
\epsfig{file=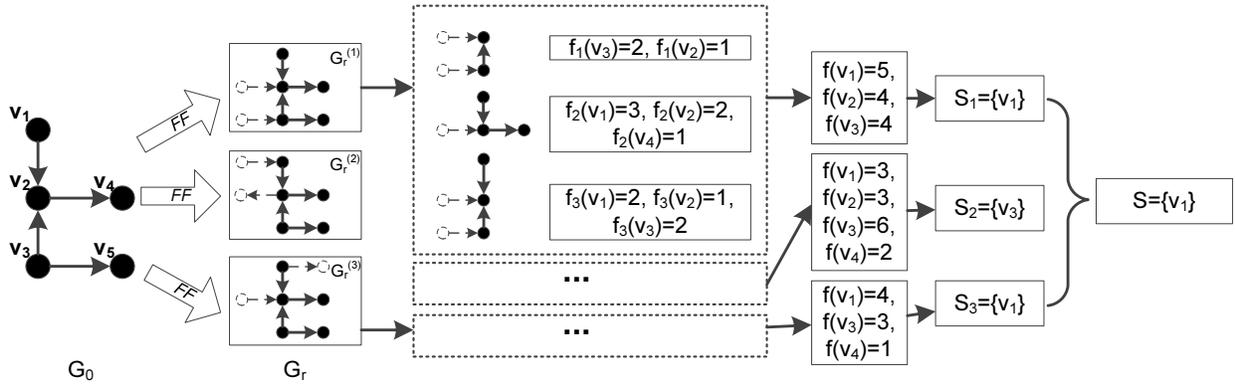, width=0.9\linewidth}
\vspace{0ex}\caption{Greedy seeds selection by the \textsc{proteus-genie} algorithm.}\label{fig:genie}
\hrule
\vspace{0ex} \end{figure*}
Intuitively, seeds selection in \textsc{proteus-genie} is as follows. Firstly, given the current network $G_0$ at time $t_0$, it evaluates the marginal expected influence of all nodes $v\in V_0$  that are predicted to be in $G_{r}$ at time $t_r$, namely $\sigma(v,t_r)$. Note that the topological structure of $G_{r}$ at target time $t_r > t_0$ is generated using \textsc{ffm} based on $G_0$. The forward burning probability and backward burning ratio are selected by fitting the model using the network evolution historical logs before $t_0$. Secondly, it selects the node with the largest expected influence as the first node and removes it from $G_r$. Thirdly, it performs the previous two steps iteratively for $k$ rounds such that it selects $k$ seeds as $S_1$. Observe that in previous steps, we generate one target network $G_r$ using \textsc{ffm}, which results in a deterministic network at time $t_r$. However, the network evolution using \textsc{ffm} during $t_0$ to $t_r$ is a random process which cannot be accurately described using a single-round simulation. Therefore, the previous three steps are executed for $I$ rounds independently, resulting in $I$ different instances of $G_r$, denoted by $G_r^{(1)}, G_r^{(2)}, \ldots, G_r^{(I)}$. Consequently, the seeds sets $S_1, \ldots, S_I$ are generated after $I$ rounds. Finally, it aggregates the ranks of these seeds and selects the top-$k$ nodes with the highest overall ranks as the final seed set $S$. We now formally describe the algorithm.

\begin{algorithm}[t]

\KwIn{Current network $G_0=(V_0,E_0)$, $k$, $p$, $I$, $R$, target time $t_r$, forwards burning probability $\alpha$, backward burning ratio $\gamma$.}
\KwOut{Seed set $S$ of nodes, $|S|=k$}
\Begin{
\ForEach{$i=1$ to $I$}{ 
    $G_r^{(i)}=(V_r^{(i)}, E_r^{(i)})\leftarrow FF(G_0, \alpha, \gamma, t_r)$\;
    $S_i\leftarrow\emptyset$, $V_0^\prime\leftarrow V_0$\;
    \ForEach{$j=1$ to $k$}{ 
        \ForEach{$iter=1$ to $R$}{
            generate $G_r^{(i)\prime}=(V_r^{(i)\prime}, E_r^{(i)\prime})$ by removing each edge $\overrightarrow{uv}$ from $G_r^{(i)}=(V_r^{(i)}, E_r^{(i)})$ with probability $1-p$\;
            \ForEach{$v\in V_0$}{
                $f(v) += \sigma_{E_r^{(i)\prime}}(S_i\cup \{v\})-\sigma_{E_r^{(i)\prime}}(S_i)$\;
            }
        }
        $v^*=\mathop{\arg \max}\limits_{v\in V_0}{f(v)}$\;
        $S_i=S_i\cup\{v^*\}$, $S_i(v^*)=j$\;
        $V_0^\prime=V_0^\prime\setminus\{v^*\}$\;
    }
}
$S^*=\bigcup\limits_{i=1}^{I}S_i$\;
\ForEach{$v\in S^*$}{
    $S^*(v)=\sum\limits_{i=1}^{I}{S_i(v)}$\;
}
rank $v\in S^*$ according to $S^*(v)$ in descending order\;
return $S\leftarrow$ top-$k$ items in $S^*$\; 
}
\caption{The \textsc{proteus-genie} Algorithm.}\label{alg:dgic}
\vspace{0ex}\end{algorithm}

The formal procedure is outlined in Algorithm~\ref{alg:dgic}. Firstly, it simulates the evolution of the network $G_0$ to $G_r$ using \textsc{ffm} (Definition~\ref{def:ffm}) and then initialize a seed set instance $S_i$ as empty (Lines 3-4). Then, it iteratively selects $k$ seed nodes into $S_i$ (Lines 5-12). For the selection of each seed node, we generate graph $G_r^\prime$ by removing each edge in $G_r$ independently with probability $1-p$, resulting in a spanning graph $G_r^\prime=(V_r^\prime,E_r^\prime)$. In this manner, $E_r^\prime$ can be viewed as live edges set $X_r$ at time $t_r$, from which we can compute the marginal influence for each $v\in V_0$. This process repeats for $R$ times and the marginal influences of each node are aggregated (Lines 6-9). Afterwards, the algorithm selects the nodes with the maximal accumulated marginal influence so far (denoted as $v^*$) and inserts it into $S_i$ and removes it from $V_0^\prime$. Meanwhile, it also records the rank of each seed in $S_i$ as $S_i(\cdot)$. The above steps are iteratively performed for $I$ times, until each of $S_1, \ldots, S_I$ is filled with $k$ seeds (Lines 2-12).

So far, we have $I$ instances of seed node set, each of which consists of $k$ nodes as well as their ranks $S_i(\cdot)$. Hence, for each of the nodes $v$ that appears in $S_1, \ldots, S_I$ at least once, the algorithm aggregates its ranks (Lines 13-15). Finally, it selects the top-$k$ nodes as the final seed set $S$ (Lines 16-18).


\begin{example}
Consider Fig.~\ref{fig:genie}. Suppose $k=1$, $R = 3$, and $I=3$. Let the current network at time $t_0$ is $G_0$ as shown in the left-hand side of the figure. First, the \textsc{proteus-genie} algorithm utilizes \textsc{ffm} to randomly predict an instance of the target network  $G_r^{(1)}$ at time $t_r > t_0$. Then it randomly removes each edge with probability $1-p$ for $R$ times from $G_r^{(1)}$. This results in three instances of influence. Accordingly, it finds a ranked seed set $S_1$ consisting of the top-$1$ node with the largest expected influence scope over these instances. Afterwards, these steps are repeated twice (\ie $I-1$) by randomly predicting two other instances of the target network, resulting in $G_r^{(2)}$ and $G_r^{(3)}$. Similarly, the algorithm selects another two seed sets, $S_2$ and $S_3$. Finally, it assembles $S_1, \ldots, S_3$ into a bag of nodes $S^* = [v_1, v_3, v_1]$, from which the top-$1$ node $v_1$ with the maximal frequency is returned.
\end{example}

\begin{theorem}\label{the:alg1comp}
\emph{The time complexity of the \textsc{proteus-genie} algorithm (Algorithm~\ref{alg:dgic}) is }$O(I(kR|E_r|+r|E_r|(|V_0|+r))+I)$.
\end{theorem}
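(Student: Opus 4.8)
The plan is to establish the bound by an additive, line-by-line accounting of Algorithm~\ref{alg:dgic}, grouping the work into three blocks: (i) generation of each predicted target network via \textsc{ffm} (Line 3), (ii) the greedy seed-selection loop (Lines 5--12), and (iii) the final rank aggregation (Lines 13--18). Since the outer loop (Line 2) repeats blocks (i) and (ii) exactly $I$ times, the overall cost is $I$ times the per-iteration cost of blocks (i)+(ii), plus the one-off cost of block (iii); the goal is to show these sum to the claimed expression.

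First I would bound block (ii), the seed selection, as it is the more routine part. The loop fills $S_i$ with $k$ nodes, and for each of the $k$ seeds we perform $R$ rounds of edge sampling. Each round generates a live-edge subgraph $G_r^{(i)\prime}$ by independently deleting every edge with probability $1-p$ (Line 7), costing $O(|E_r|)$, and then computes the marginal gain $\sigma_{E_r^{(i)\prime}}(S_i\cup\{v\})-\sigma_{E_r^{(i)\prime}}(S_i)$ for each $v\in V_0$ (Lines 8--9). Over a \emph{fixed} sample graph these marginal gains reduce to reachability computations, so the work of a round is charged at $O(|E_r|)$; accumulating over $k$ seeds and $R$ rounds gives $O(kR|E_r|)$. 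The $\arg\max$ over $V_0$ (Line 10) contributes only $O(k|V_0|)$, which is absorbed. This yields the $kR|E_r|$ term.

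The hard part will be block (i), bounding the cost of a single \textsc{ffm} simulation that grows $G_0$ into $G_r$. Here $G_r$ acquires $r$ new nodes (one per evolution step), so its order is $|V_0|+r$. Each incoming node triggers the recursive burning of Definition~\ref{def:ffm}: starting from a uniformly chosen ambassador, the process repeatedly selects incident in-/out-links and recurses, in the worst case examining every edge of the current graph, i.e.\ $O(|E_r|)$ edges. The subtlety is the acyclicity constraint (``nodes can only be visited once''): enforcing it requires a membership test against the set of already-burned nodes at each visit, which I would charge at $O(|V_0|+r)$ per visit. Hence burning one node costs $O(|E_r|(|V_0|+r))$, and growing the network by $r$ nodes costs $O(r|E_r|(|V_0|+r))$ per simulation --- exactly the second term. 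Establishing this worst-case recursion bound, rather than an expected bound over the binomial samples, is the one step that needs genuine care.

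Finally I would dispatch block (iii): the bag $S^*$ holds at most $kI$ entries, so aggregating ranks (Lines 14--15) and selecting the top-$k$ (Lines 16--18) is linear in the number of instances once $k$ is treated as a constant, contributing the additive $O(I)$. Summing the blocks, each outer iteration costs $O(kR|E_r| + r|E_r|(|V_0|+r))$; multiplying by $I$ and adding the $O(I)$ aggregation gives $O\big(I(kR|E_r| + r|E_r|(|V_0|+r)) + I\big)$, as claimed.
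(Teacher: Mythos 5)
Your proposal is correct and follows essentially the same decomposition as the paper's proof: per outer iteration, the cost of one \textsc{ffm} simulation, $O(r|E_r|(|V_0|+r))$, plus the greedy live-edge spread estimation, $O(kR|E_r|)$, multiplied by the $I$ iterations, plus an additive $O(I)$ for the final rank aggregation. The only difference is one of provenance rather than structure: the paper obtains both per-iteration sub-bounds by citation (Leskovec et al.\ for the Forest Fire cost, Chen et al.\ for the greedy estimation cost), whereas you re-derive them from first principles and arrive at the same expressions.
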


\begin{IEEEproof}
The time complexity of \textit{FF} procedure is \linebreak $O(r|E_r|(|V_0|+r))$~\cite{DBLP:conf/kdd/LeskovecKF05}. In Algorithm~\ref{alg:dgic}, in each iteration for $i=1$ to $I$, the time complexity is $O(kR|E_r|)$~\cite{effMax_KDD09_Chen} plus the complexity of \textit{FF} procedure. Thus, filling all $S_i$ for $i=1, \ldots, I$ requires $O(kR|E_r|+r|E_r|(|V_0|+r))$. Therefore, the overall time complexity of Algorithm~\ref{alg:dgic} is $O(I(kR|E_r|+r|E_r|(|V_0|+r))+I)$.
\end{IEEEproof}

\begin{theorem}
\emph{Let $G_0^r$ be the subgraph that joined $G_0$ during $t_0$ to $t_r$ and $\hat{G}_0^r$ be the subgraph generated by \textsc{ffm} from $t_0$ to $t_r$. If $\hat{G}_0^r$ is identical to $G_0^r$ with probability $\eta$, then each $S_i$ corresponding to $\hat{G}_0^r$ generated by the \textsc{proteus-genie} algorithm (Algorithm~\ref{alg:dgic}) is guaranteed to achieve $(1-1/e)$-approximation for \textsc{proteus-im} with probability $\eta$.}
\end{theorem}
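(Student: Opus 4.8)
The plan is to condition on the event $\mathcal{E}$ that the \textsc{ffm}-generated subgraph coincides with the true one, i.e. $\hat{G}_0^r = G_0^r$, which by hypothesis has probability $\eta$. The key observation is that whenever $\mathcal{E}$ holds, the instance $G_r^{(i)}$ predicted in Line 3 of Algorithm~\ref{alg:dgic} is \emph{exactly} the true target network $G_r$, so the subsequent computation of $S_i$ operates on the correct graph on which influence will eventually propagate. The whole argument then reduces to showing that, on this event, $S_i$ is produced by classical greedy and inherits its guarantee.

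First, I would argue that, conditioned on $\mathcal{E}$, the inner loop (Lines 5--12) is precisely the greedy hill-climbing procedure of Kempe \etal applied to $G_r$ for the objective $\sigma(\cdot, t_r)$. Lines 6--9 form a Monte-Carlo estimator of the marginal gain $\sigma(S_i\cup\{v\}, t_r) - \sigma(S_i, t_r)$ by averaging over $R$ independent live-edge samples $X_r$ drawn by deleting each edge of $G_r$ with probability $1-p$; this is the standard live-edge characterization of the \textsc{ic} model. Line 10 then selects the node of maximum estimated marginal gain and Line 11 commits it to $S_i$, so $S_i$ is built by the classical greedy rule on the correct graph.

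Second, I would invoke the structural properties of the objective. Lemma~\ref{the:subm} already establishes that $\sigma(\cdot, t_r)$ is submodular; together with the (routine) facts that it is non-negative and monotone non-decreasing, the Nemhauser--Wolsey--Fisher theorem on maximizing a monotone submodular set function under a cardinality constraint of size $k$ guarantees $\sigma(S_i, t_r) \ge (1-1/e)\,\sigma(O_k^r, t_r)$, where $O_k^r$ is an optimal size-$k$ seed set for \textsc{proteus-im} on $G_r$. Combining with the conditioning step, this guarantee holds exactly on $\mathcal{E}$, and since $\Pr[\mathcal{E}] = \eta$ we obtain the claimed $(1-1/e)$-approximation with probability $\eta$.

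The main obstacle is the Monte-Carlo estimation of marginal gains in Lines 6--9: with finitely many samples $R$ the greedy choice may differ from the exact one, so strictly the attainable bound is $(1-1/e-\varepsilon)$ with high probability for a sampling error $\varepsilon$ that shrinks as $R$ grows. Since the theorem asserts a clean $(1-1/e)$ bound, I would state it under the idealization that the expected influence is computed exactly (equivalently, let $R\to\infty$) and remark that the $\varepsilon$-slack version follows verbatim as in prior \textsc{im} analyses. A secondary point worth making explicit is that monotonicity and non-negativity of $\sigma(\cdot, t_r)$ are not formally proven in the excerpt, so I would add a one-line justification that, under any live-edge realization, adding a seed can never shrink the reachable set.
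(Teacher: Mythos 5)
Your proposal is correct and follows essentially the same route as the paper's own proof: condition on the event that the \textsc{ffm} prediction coincides with the true evolved subgraph (probability $\eta$), observe that on this event the algorithm reduces to classical greedy hill-climbing on the true target network $G_r$, and invoke the standard $(1-1/e)$ guarantee for monotone submodular maximization, so the bound holds with probability $\eta$. Your write-up is in fact more careful than the paper's, which silently ignores both the finite-sample Monte-Carlo error in estimating marginal gains (your $(1-1/e-\varepsilon)$ caveat) and the need to justify monotonicity and non-negativity of $\sigma(\cdot,t_r)$.
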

\begin{IEEEproof}
Due to the newly joined edges and nodes, for each $v\in V_0$, the expected influence of it at $G_r$, namely $\sigma(v,r)$ can be separated as two parts, one from original graph $G_0$ (i.e., $\sigma(v,0)$), the other from the marginal increase in the expected influence caused by $G_0^r$, denoted by $f(v,0,r)$. Let $\hat{f}(v,0,r)$ be the latter part, then the expected influence of $v$ in predicted graph $G_r$ can be computed as: $\hat{\sigma}(v,r)=\sigma(v,0)+\hat{f}(v,0,r)$. As $\hat{G}_0^r$ is identical with $G_0^r$ with probability $\eta$, then $\hat{f}(v,0,r)=f(v,0,r)$ with probability $\eta$. Therefore, $\hat{\sigma}(v,r)=\sigma(v,r)$ with probability $\eta$.

Moreover, as \textsc{proteus-im} degenerates to classical hill-climb algorithm if graph $G_r$ is static, which is guaranteed to be within $(1-1/e)$ of the optimal, then the seeds $S$ selected by \textsc{proteus-im} in $G_r$ is within $(1-1/e)$ of the optimal. Putting it together, the seeds selected by \textsc{proteus-im} can achieve $(1-1/e)$-approximation with probability $\eta$.
\end{IEEEproof}

\textbf{Remark.} Typically network evolution may be slower than influence propagation.  However, our framework does not demand any correlation between the time steps of the \textsc{ffm} and the influence propagation time. As long as network is evolving and influence propagation takes time, our proposed model and algorithm fit well. In fact, when \textsc{ffm} is extremely slow (\ie network hardly evolves) and influence propagation is extremely fast (\ie $t_r$ is negligible), the \textsc{proteus-im} problem is close to the classical \textsc{im} problem. Particularly, in the unrealistic case when \textsc{ffm} is very slow (\eg each time only one node is added and we have enough time to grasp the topology of network at any temporal state) and $t_r$ is negligible, \textit{MaxG}~\cite{DBLP:conf/icdm/ZhuangSTZS13} works well. In contrast, whenever influence propagation takes time, our solution fits well.

Furthermore, since in our framework we select seeds from $G_0$ for maximizing influence at $G_r$ (as discussed in Section~\ref{ssec:strategy}), we do not need to care about how $G_0$ evolves to $G_r$. This only matters for techniques (\eg \textit{MaxG}~\cite{DBLP:conf/icdm/ZhuangSTZS13}) where seeds are iteratively selected during the evolution.

\vspace{0ex}\section{A Reverse Reachable Set-based Solution}\label{sec:heur}
Observe that the time complexity of \textsc{proteus-genie} is highly influenced by $|E_r|$ and $|V_0|$ (Theorem~\ref{the:alg1comp}). These values are large for real-world networks containing millions of nodes and hence the efficiency of the greedy algorithm can be adversely affected when dealing with such networks. In this section, we address this issue by proposing an algorithm called \textsc{proteus-seer} (\textbf{Pro}pagation \textbf{T}im\textbf{e}-conscio\textbf{us} \textbf{SE}ed S\textbf{E}lection using \textbf{R}R set), which leverages the notion of \textit{Reverse Reachable (\textsc{rr}) Set}~\cite{DBLP:conf/soda/BorgsBCL14} in addition to \textsc{ffm} for seed selection. For the sake of completeness, we first briefly introduce the concept of \textsc{rr} set before discussing our algorithm to address the \textsc{proteus-im} problem.

\vspace{0ex}\subsection{Reverse Reachable Set}
Let $v$ be a node in $G$ and $g$ be a graph obtained by removing each edge $e$ in $G$ with probability $1-p$. The \textit{reverse reachable} (\textsc{rr}) set~\cite{DBLP:conf/soda/BorgsBCL14} for $v$ in $g$, denoted as $R(v, g)$, is the set of nodes in $g$ that can reach $v$. That is, for each node $u$ in the \textsc{rr} set, there is a directed path from $u$ to $v$ in $g$. For example, consider Fig.~\ref{fig:fig1}(a). The \textsc{rr} set for node $v_5$ in $G$ contains all nodes in $G$ that can reach $v_5$. That is, $R(v_5,G)=\{v_3,v_1,v_4\}$.

Let $\mathcal{G}$ be the distribution of $g$ induced by the randomness in edge removals from $G$. A \textit{random} \textsc{rr} set~\cite{DBLP:conf/soda/BorgsBCL14} is an \textsc{rr} set generated on an instance of $g$ randomly sampled from $\mathcal{G}$ for a node selected uniformly at random from $g$.

Note that the notion of \textsc{rr} set is currently the most efficient and promising way to answer influence maximization problem with guaranteed result quality, and has been recently deployed in~\cite{DBLP:journals/pvldb/OhsakaAYK16,DBLP:conf/soda/BorgsBCL14} to generate ``near-optimal'' solution for the \textsc{im} problem. However, these techniques either assume the network is static or ignore the influence propagation time. More importantly, they cannot be trivially extended to handle the \textsc{proteus-im} problem. \eat{Specifically, in these efforts the \textsc{rr} set is computed based on the assumption that the underlying network is static and its topology is known at any timepoint.} \eat{In contrast, in the \textsc{proteus-im} problem we need to compute the \textsc{rr} set by taking into account the fact that the network may evolve during influence propagation based on randomly predicting the network change using Forest Fire model. Furthermore, as the future network $G_r$ is randomly predicted several times, we utilize a bag of nodes to assemble all instances of the reverse reachable sets computed from different random instances of predicted future network.}

\begin{algorithm}[t]

\KwIn{Current network $G_0=(V_0,E_0)$,  seeds number $k$, influence probability $p$, $\theta$, target time $t_r$, forward burning probability $\alpha$, backward burning ratio $\gamma$.}
\KwOut{Seed set $S$ of nodes, $|S|=k$.}
\Begin{
$S \leftarrow \emptyset$\;
\ForEach{$i$ in $1$ to $\theta$}{ 
    $G_r^{(i)}=(V_r^{(i)}, E_r^{(i)})\leftarrow FF(G_0, \alpha, r, t_r)$ \;
    uniformly sample $v_i\in V_r^{(i)}$\;
    initialize $RR$ set for $v_i$ as: $R(v_i,G_r^{(i)})=\{v_i\}$\;
    \ForEach{$v^\prime\in R(v_i,G_r^{(i)¡¯})$}{
        \ForEach{$\overrightarrow{wv^\prime}\in E_r^{(i)¡¯}$}{
            with probability $p$, let $R(v_i,G_r^{(i)¡¯})=R(v_i,G_r^{(i)¡¯})\cup\{w\}$\;
        }
    }
}
$\mathcal{R} = \{R(v_1,G_r^{(1)¡¯}), R(v_2,G_r^{(2)¡¯}), \ldots, R(v_\theta,G_r^{(\theta)¡¯})\}$\;
$\mathcal{RS} = \bigcup\limits_{R_i\in\mathcal{R}}{R_i}$\;
\ForEach{$i=1$ to $k$}{
    \ForEach{$w\in\mathcal{RS}$}{
        $ct(w,\mathcal{R}) = \sum\limits_{R_j\in\mathcal{R}}{ct(w, R_j)}$\;
    }
    $u = \mathop{\arg \max}\limits_{w\in\mathcal{RS}}{ct(w,\mathcal{R})}$\;
    $S = S\cup\{u\}$\;
    \ForEach{$R_j\in\mathcal{R}$ that contains $u$}{
        remove $R_j$ from $\mathcal{R}$\;
    }
}
return $S$\;
}
\vspace{0ex}\caption{The \textsc{proteus-seer} Algorithm.}\label{alg:her}
\end{algorithm}

\vspace{0ex}\subsection{The \textsc{\large proteus-seer} Algorithm} \label{ssec:heuristic}
The key idea of the algorithm is to compute the \textsc{rr} set by considering the evolution of the network due to random prediction using \textsc{ffm}. Since the target network is randomly predicted several times, we utilize a bag of nodes to assemble all instances of the \textsc{rr} sets computed from different random instances of the predicted network. Specifically, our algorithm comprises of the following key steps.

\begin{enumerate} 
  \item First, we use the \textsc{ffm} to simulate the evolution of $G_0$, from which we get $G_r$. This process is iteratively repeated for $\theta$ times, such that we can get $\theta$ different instances of $G_r$, denoted by $\mathcal{G}_r = \{G_r^{(1)}, G_r^{(2)}, \ldots, G_r^{(\theta)}\}$. In particular, $\theta$ is computed as in~\cite{DBLP:conf/sigmod/TangSX15}.
  \item Second, for each instance $G_r^{(i)}$, uniformly sample a node from $V_r$ as $v_i$, and generate a \textsc{rr} set for it denoted as $R(v_i,G_r^{(i)})$. Consequently, we have $\theta$ such sets, each corresponds to a sampled node. In the sequel, we denote these sets as $\mathcal{R}=\{R(v_1, G_r^{(1)}), R(v_2, G_r^{(2)}), \ldots, R(v_\theta, G_r^{(\theta)})\}$
  \item Finally, we greedily select from all \textsc{rr} sets in $\mathcal{R}$ the node $w$ which appears in the most number of \textsc{rr} sets and then remove these sets from $\mathcal{R}$. We iteratively select $k$ such nodes and then output them as final seeds set $S$.
\end{enumerate}

\begin{figure}[t]
\centering
\epsfig{file=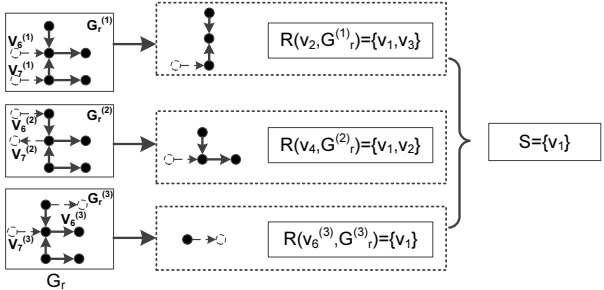, width=0.9\linewidth}
\vspace{0ex}\caption{RR set-based seed selection.}\label{fig:heuristic}
\hrule
\vspace{0ex}\end{figure}

Algorithm~\ref{alg:her} outlines the formal procedure.  Similar to the \textsc{proteus-genie} algorithm, it simulates the evolution of $G_0$ based on \textsc{ffm} to generate an instance of target network $G_r^{(i)} = (V_r^{(i)}, E_r^{(i)})$ (Line 4). Based on $G_r^{(i)}$, it uniformly samples a node $v_i\in V_r^{(i)}$ and generates a random \textsc{rr} set of this node with respect to $G_r^{(i)}$, resulting in $R(v_i, G_r^{(i)})$ (Lines 5-9). The generation of each \textsc{rr} set is implemented as a randomized breath-first search on $G_r^{(i)}$. Given a node $v_i$, it first creates an empty queue and then flips a coin for each incoming edge $e$ of $v_i$. It retrieves the node $u$ with probability $p$ from which $e$ starts and inserts it into the queue. Subsequently, the algorithm iteratively extracts the node $v¡ä$ at the top of the queue and examines each incoming edge $e¡ä$ of $v_i$. If $e¡ä$ starts from an unvisited node $u¡ä$, it adds $u¡ä$ into the queue with probability $p$. This iterative process terminates when the queue becomes empty. Finally, the algorithm collects all nodes visited during this process (including $v_i$) and use them to form $R(v_i, G_r^{(i)})$.

The aforementioned steps are repeated (in parallel) for $\theta$ times resulting in $\mathcal{R} = \{R(v_1, G_r^{(1)^\prime}), \ldots, R(v_\theta, G_r^{(\theta)^\prime})\}$ (Line 10). Let $\mathcal{RS}$ be the set of nodes that appear in any of these $R(v_i, G_r^{(i)^\prime})$ (Line 11). For all the nodes in $\mathcal{RS}$, the algorithm greedily selects the one, say $u$, which appears in the most number of \textsc{rr} sets in $\mathcal{R}$ (Lines 13-16), indicating that $u$ can reach the maximal number of nodes in $v_1, v_2, \ldots, v_\theta$. Then it removes from $\mathcal{S}$ the \textsc{rr} sets where $u$ appears (Lines 17-18). In particular, we denote $ct(u, R(v_i, G_r^{(i)^\prime}))=1$ if $u\in R(v_i, G_r^{(i)^\prime})$, and $0$ otherwise. This seeds selection process is iteratively performed for $k$ rounds to identify the final set of seed nodes (Lines 12-18).

\begin{table}[t]

\vspace{0ex}
\centering \caption{Datasets.}
\label{t:data}
\begin{tabular}{||l|p{11mm}|p{13mm}|p{30mm}||}\hline
\textit{Network} & $\#$\textit{nodes} & $\#$\textit{edges}  & \textit{Degree of Change} (DoC) \\ \hline
\textsf{Syn-$G_1$} & 4,000 & 16,033 & - \\
\textsf{Syn-$G_2$} & 4,500 & 17,309 & Low \\
\textsf{Syn-$G_3$} & 5,000 & 18,512 & Low \\\hline
\textsf{Ph-$G_1$} & 32,354 & 264,963 & -\\
\textsf{Ph-$G_2$} & 38,558 & 347,268 & Medium\\
\textsf{Pa-$G_1$} & 1,061,606 & 1,365,903 & - \\
\textsf{Pa-$G_2$} & 1,772,362 & 5,452,113 & High\\
\textsf{Pa-$G_3$} & 2,436,431 & 11,437,592 & High\\
\textsf{Pa-$G_4$} & 3,774,768 & 16,518,948 & High\\
\hline
\end{tabular}
\vspace{0ex}\end{table}

\eat{\begin{example} {\em
Consider the current network $G_0$ at time $t_0$ in Fig.~\ref{fig:genie}. Suppose $k=1$ and $\theta=3$. The \textsc{proteus-seer} algorithm randomly predicts an instance of the target network $G_r^{(1)}$ as shown in Fig.~\ref{fig:heuristic}. From $G_r^{(1)}$, it uniformly samples a node $v_2\in V$ and generates its \textsc{rr} set by following the steps mentioned above. Let $R(v_2,G_r^{(1)})=\{v_1,v_3\}$.  Afterwards, it repeats these steps for another two times by randomly predicting two other instances of the target network, resulting in $G_r^{(2)}$ and $G_r^{(3)}$. Hence, another two \textsc{rr} sets, namely $R(v_4,G^{(2)}_r)$ and $R(v_6^{(3)},G^{(3)}_r)$, are generated. Finally, these \textsc{rr} sets are assembled into a bag of nodes $S^*$, from which the top-1 node with the maximal frequency, \ie $v_1$, is returned as output.}
\end{example}}

\begin{theorem}\label{the:alg2comp}
\emph{The time complexity of the \textsc{proteus-seer} Algorithm is} $O((|E_r|+|V_r|)\log|V_r|+r|E_r||V_r|+k|V_r|)$.
\end{theorem}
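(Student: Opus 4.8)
The plan is to bound the running time by partitioning Algorithm~\ref{alg:her} into three phases whose costs correspond one-to-one with the three summands of the claimed bound: (i) the $\theta$ network-evolution simulations via \textsc{ffm} (Lines 3--4), (ii) the generation of the $\theta$ reverse-reachable sets together with their assembly into $\mathcal{R}$ and $\mathcal{RS}$ (Lines 5--11), and (iii) the greedy max-coverage seed selection (Lines 12--18). Throughout I would use the structural fact that the target network consists of the original nodes plus those added during evolution, i.e. $|V_r| = |V_0| + r$, so that any expression of the form $r|E_r|(|V_0|+r)$ can be rewritten as $r|E_r||V_r|$.

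For phase (i), I would invoke the cost of a single \textit{FF} call, which is $O(r|E_r|(|V_0|+r))$ as already used in the proof of Theorem~\ref{the:alg1comp} and established in~\cite{DBLP:conf/kdd/LeskovecKF05}. Applying the substitution $|V_0|+r = |V_r|$ turns this into $O(r|E_r||V_r|)$ per simulated instance. Since the $\theta$ instances $G_r^{(1)},\ldots,G_r^{(\theta)}$ are generated independently and, as noted after the algorithm, in parallel, the simulation phase contributes the single term $r|E_r||V_r|$ rather than a $\theta$-fold product; this yields the second summand.

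For phase (ii), each \textsc{rr} set is built by a randomized reverse breadth-first search over an instance $G_r^{(i)}$, so a single set costs $O(|E_r|+|V_r|)$ in the worst case, since every node and its incoming edges are examined at most once. The sample count $\theta$ is fixed exactly as in the \textsc{imm} technique~\cite{DBLP:conf/sigmod/TangSX15}, whose defining bound carries the factor $\log|V_r|$; the standard martingale-based estimation argument there shows that the aggregate expected work of producing all $\theta$ sets and forming $\mathcal{R}$ and $\mathcal{RS}$ is $O((|E_r|+|V_r|)\log|V_r|)$ once the parameters $k,\ell,\epsilon$ are suppressed. This accounts for the first summand.

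Finally, for phase (iii) I would analyze the $k$ greedy iterations. Maintaining the coverage counts $ct(w,\mathcal{R})$ incrementally, the $\arg\max$ scan over $\mathcal{RS}$ costs $O(|\mathcal{RS}|)=O(|V_r|)$ per iteration, while the total work of deleting the covered sets is amortized into the \textsc{rr}-generation cost of phase (ii); over $k$ iterations this gives $O(k|V_r|)$, the third summand. Summing the three phases yields $O((|E_r|+|V_r|)\log|V_r| + r|E_r||V_r| + k|V_r|)$. The main obstacle I anticipate lies in phase (ii): unlike static \textsc{imm}, here every \textsc{rr} set is drawn from a \emph{different} predicted graph $G_r^{(i)}$, so one must argue carefully that the per-set reverse-BFS cost multiplied by the Tang \etal choice of $\theta$ still collapses to $O((|E_r|+|V_r|)\log|V_r|)$ rather than a larger product, and reconcile this accounting with the parallel-execution assumption that was used to drop the $\theta$ factor from the simulation phase, so that the two phases are treated consistently.
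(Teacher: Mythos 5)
Your proof follows essentially the same route as the paper's own: the same three-phase decomposition (FFM evolution simulation contributing $r|E_r||V_r|$, generation of the $\theta$ \textsc{rr} sets with the $(|E_r|+|V_r|)\log|V_r|$ factor inherited from the analysis of Tang \etal, and $O(k|V_r|)$ for greedy selection), with the same external complexity results invoked for each summand and the same suppression of the quality parameter $\ell$. The paper's proof is in fact terser than yours---it states the FFM cost once without explaining away the $\theta$-fold repetition, and does not address the fact that each \textsc{rr} set is drawn from a different predicted graph---so the caveats you flag point to looseness in the paper's own argument rather than to any gap in your proposal.
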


\begin{IEEEproof}
As reported in~\cite{DBLP:conf/sigmod/TangSX15}, the time complexity of computing $\theta$ is $O(\ell(|E_r|+|V_r|)\log|V_r|)$ where $\ell$ is a quality factor which controls the results quality. The time complexity of evolution simulation based on FFM is $O(r|E_r||V_r|)$. The process of generating a \textsc{rr} set requires a \textsc{bfs}, which is $O(|E_r|+|V_r|)$. As $\theta$ has been computed and fixed, then generating all different \textsc{rr} sets requires $O(\ell(|E_r|+|V_r|)\log|V_r|+r|E_r||V_r|)$. The time complexity of seeds selection is $O(k|V_r|)$. Moreover, $\ell$ is a predefined quality parameter (always set as $1$). Therefore, the overall time complexity of Algorithm~\ref{alg:her} is $O((|E_r|+|V_r|)\log|V_r|+r|E_r||V_r|+k|V_r|)$.
\end{IEEEproof}

\vspace{0ex}\section{Experiments}\label{sec:exp}
In this section, we investigate the performance of \textsc{proteus-genie} and \textsc{proteus-seer}. All algorithms considered for our investigation are implemented in C++. We ran all experiments on 3.2GHz Quad-Core Intel i7 machines with 16\textsc{gb} \textsc{ram}, running Windows 7. Note that there is no existing \textsc{im} algorithm that addresses the \textsc{proteus-im} problem. Hence, we are confined to use state-of-the-art algorithms designed for the classical \textsc{im} problem as baseline methods. \eat{Also, we do compare our algorithms with~\cite{DBLP:conf/icdm/ZhuangSTZS13} in the aspect of influence quality although it requires complete knowledge of the topology of the network at any time step.}

Specifically, we investigate the following key issues. (1) Is the seed set selected at target time $t_r$ differs significantly from those selected at current time $t_0$? (2) Do our proposed algorithms designed for the \textsc{proteus-im} problem consistently produce superior quality seeds compared to state-of-the-art algorithms designed for the classical \textsc{im} problem? (3) Is the running time of the \textsc{proteus-seer} algorithm reasonable for large networks without significantly compromising the quality of influence spread? (4) What is the impact of various parameters (\eg $t_r$, $p$, $I$) on the performance of the proposed algorithms?

\vspace{0ex}\subsection{Experimental Setup} \label{sec:setup}
\noindent \textbf{\underline{Datasets}.} Recall from Section~\ref{sec:intro}, influence propagation may take several weeks to months and different networks may evolve at varying rates during this period. Hence, we choose real-world and synthetic datasets for our experiments to represent these varying degree of change (DoC). Table~\ref{t:data} summarizes these datasets. We use two real-world datasets to generate different snapshots of a network representing different degree of evolution. The first one is high-energy physics (Hep) paper citation networks collected through \textit{Arxiv}\footnote{ \url{http://arxiv.com}} during the period from January 1993 to April 2003 (124 months). It contains the historical logs for the appearance timestamp of each paper as well as its citation links\footnote{ Downloaded from \url{http://www.cs.cornell.edu/projects/kddcup/datasets.html}.}. Since each node is associated with a timestamp indicating when it has joined the network, we can construct different instances of the social network at different time points. The networks \textsf{Ph-$G_1$} and \textsf{Ph-$G_2$} represent two temporal states of the citation graph\eat{ at the end of 2000 and 2002, respectively}. The second dataset, \emph{Patents}\footnote{ Downloaded from \url{http://www.nber.org/patents/}.}, comprises of information on almost 3 million U.S. patents granted between January 1963 and December 1999 and all citations made to these patents between 1975 and 1999 (over 16 million). A specific temporal state is extracted by selecting all citations (edges) that appear before a specific timestamp. In particular, \textsf{Pa-$G_1$}, \textsf{Pa-$G_2$}, \textsf{Pa-$G_3$} and \textsf{Pa-$G_4$} are selected as four representative temporal states of this citation network\eat{at 1979, 1989, 1995 and 1999, respectively}. Note that we can extract any temporal states (e.g., weekly, monthly, or yearly) from these two networks. We also generate synthetic datasets using the Forest Fire model\footnote{ According to steps described in \url{http://snap.stanford.edu/snap-1.8/download.html}.} (with default $\alpha = 0.35$ and $\gamma = 0.32$) in order to simulate snapshots of a network with small degree of changes. Specifically, we generate three temporal snapshots with slightly varying number of nodes and edges, denoted by \textsf{Syn-$G_1$}, \textsf{Syn-$G_2$}, and \textsf{Syn-$G_3$}.

The last column in Table~\ref{t:data} specifies the degree of change in the network w.r.t. the number of nodes and edges compared to the preceding snapshot. In summary, the synthetic datasets represent networks with \textit{small} degree of evolution. The real-world datasets, on the other hand, represent networks with \textit{moderate} (\emph{Hep}) or \textit{high} (\emph{Patents}) degree of change. It is worth emphasizing that different real-world networks may have different degree of evolution during influence propagation. Hence, the seed set selection is impacted by the evolution characteristics of the underlying network as well as $t_r$.

\noindent \textbf{\underline{Forest Fire Model (FFM) parameters}.} As discussed in Section~\ref{ssec:GffAlg}, the forward burning probability and backward burning ratio are selected by fitting the model using the network evolution historical logs before $t_0$. That is, we select the states of networks before \textit{Ph-$G_1$} and \textit{Pa-$G_1$}, and fit \textsc{ffm} accordingly. Specifically, we set $\alpha=0.19, \gamma=0.75$ for \textit{Ph-$G_i$}; $\alpha=0.15, \gamma=0.76$ for \textit{Pa-$G_i$}.

\noindent \textbf{\underline{Algorithms}.} We run the following \textsc{im} algorithms under \textsc{ic} model (with $p=0.01$) for our experimental study:

\begin{figure}[t]
\centering
\subfloat[][\textsf{Syn-$G_1 (G_0)$,Syn-$G_3 (G_r)$}\label{f3a}]{\epsfig{file=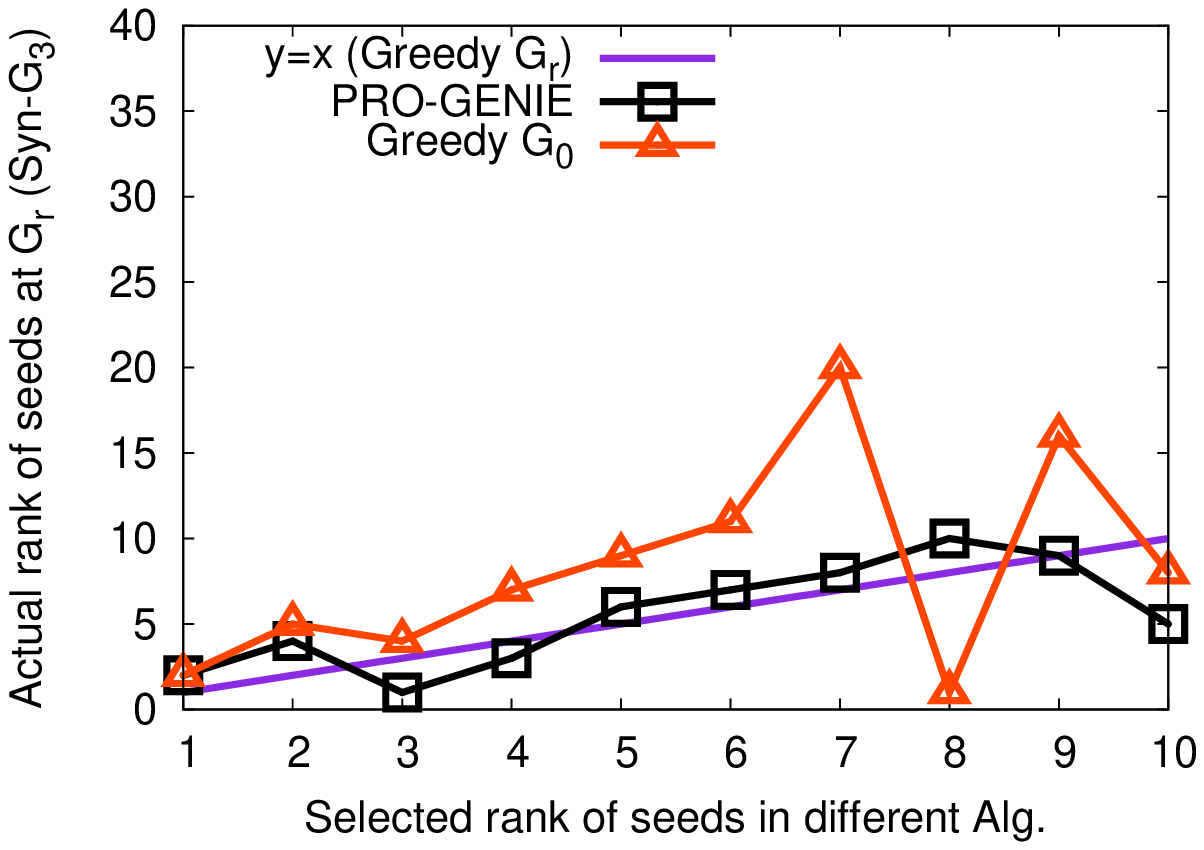, width=0.45\linewidth}}
\hspace{2ex}\subfloat[][\textsf{Syn-$G_2 (G_0)$},\textsf{Syn-$G_3(G_r)$}\label{f3b}]{\epsfig{file=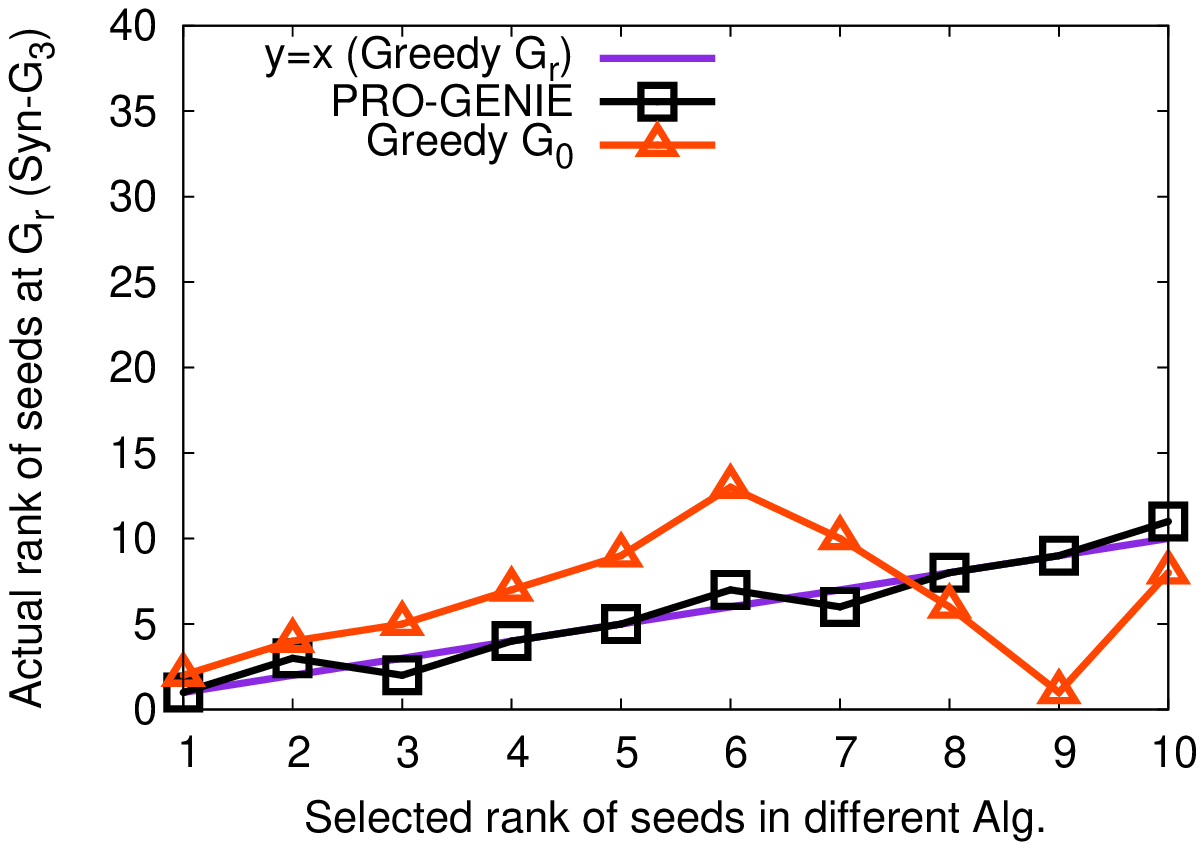, width=0.45\linewidth}}
\vspace{0ex}\caption{Ranks of seeds generated by different algorithms at current and target times.}\label{fig3}
\hrule
\vspace{0ex}\end{figure}

\begin{itemize} 

\item ``\emph{Greedy}'': \emph{MixGreedyIC} algorithm~\cite{effMax_KDD09_Chen} to address the classical \textsc{im} problem, as it exhibits the best seeds quality.

\item ``\textsc{irie}'': \textsc{irie} algorithm proposed in~\cite{irie}.

\item ``\textsc{imm}'': \textsc{imm} algorithm proposed in~\cite{DBLP:conf/sigmod/TangSX15}.

\item ``\emph{MaxG}'': \emph{MaxG} algorithm (with $\epsilon=0.01$)~\cite{DBLP:conf/icdm/ZhuangSTZS13}, which is a dynamic \textsc{im} algorithm that requires the full knowledge of network evolution.  Specifically, it assumes (a) the complete evolution logs of the network is known; (b) each time a new node arrives, there is sufficient time to update the seeds; and (c) the influence propagation time is negligible.

\item ``\textsc{pro-genie}'': The greedy algorithm in Section~\ref{sec:greedy} for the \textsc{proteus-im} problem.

\item ``\textsc{pro-seer}'': The \textsc{rr} set-based method proposed in Section~\ref{sec:heur} to address the \textsc{proteus-im} problem.

\end{itemize}

Unless specified otherwise, we set $I=500$ and $R=5,000$ for \textsc{pro-genie}, \textsc{pro-seer}, and \emph{Greedy}, respectively. \eat{In the following, we shall present the experimental results in both synthetic and real-world datasets, respectively.}

\vspace{0ex}\subsection{Experimental Results}

\begin{figure*}[t]
\centering
\subfloat[][$G_0=$\textsf{Pa-$G_1$} ($k=50$)\label{f9e}]{\epsfig{file=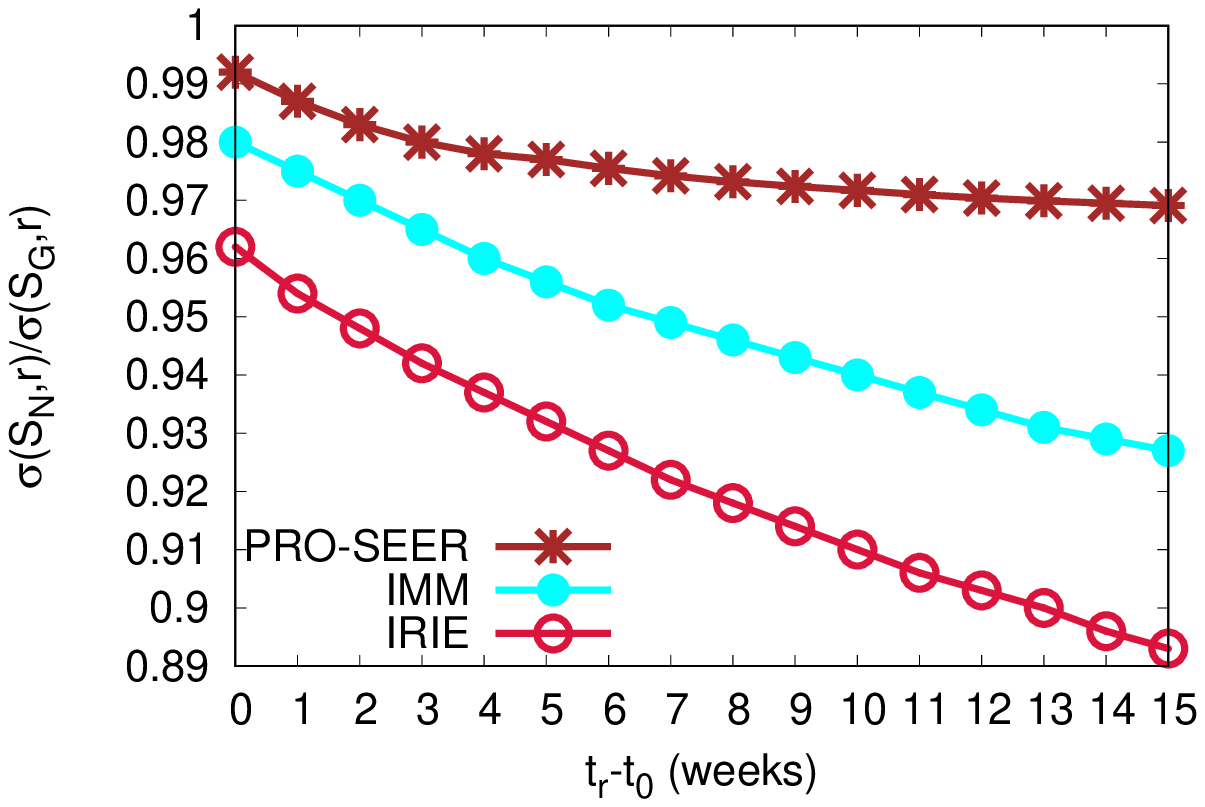, width=0.23\linewidth}}
\subfloat[][$G_0=$\textsf{Pa-$G_3$} ($k=50$)\label{f9f}]{\epsfig{file=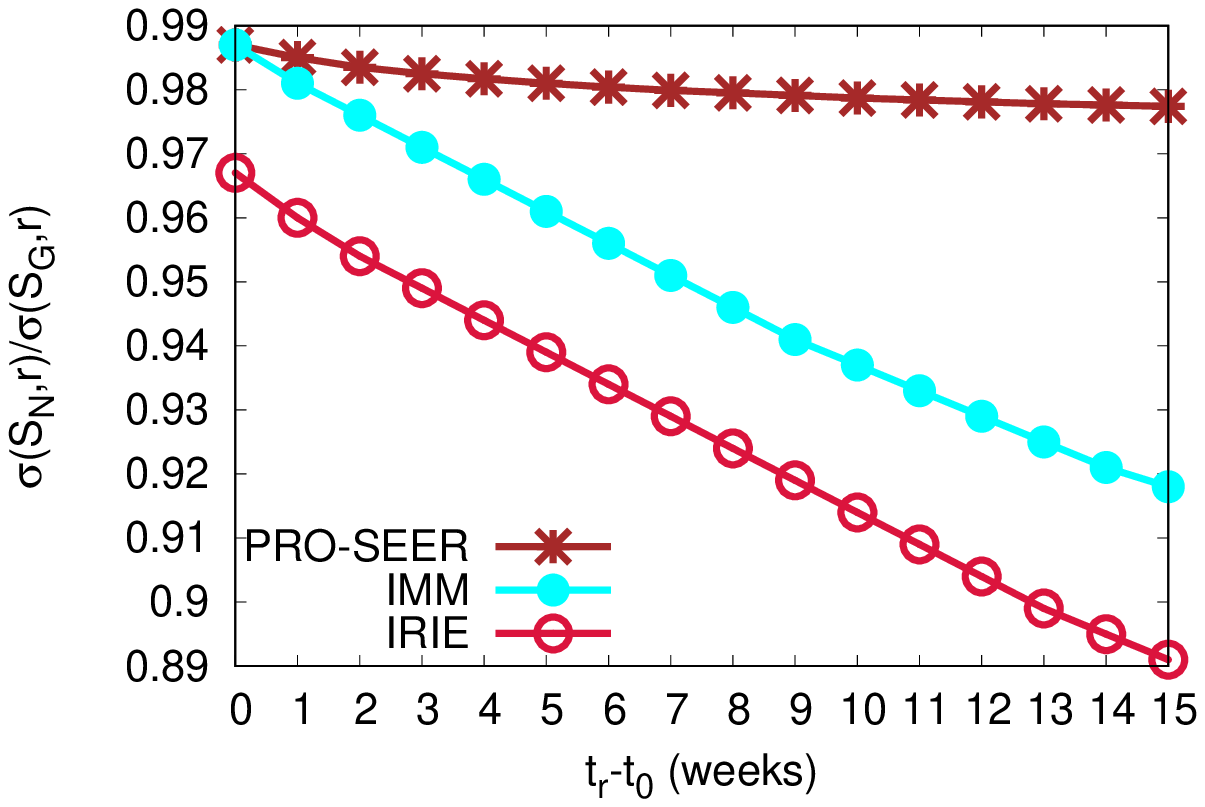, width=0.23\linewidth}}
\subfloat[][$G_0=$\textsf{Syn-$G_1$} ($k=50$)\label{f9c}]{\epsfig{file=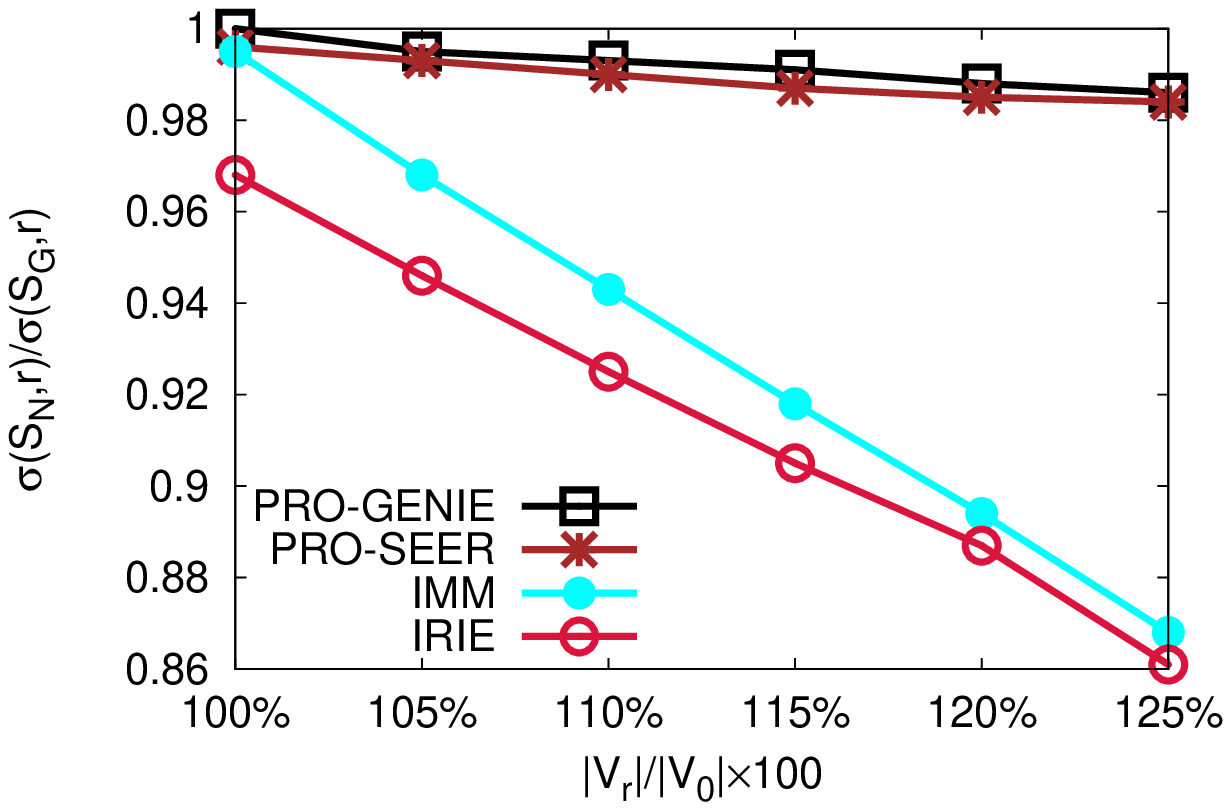, width=0.23\linewidth}}
\subfloat[][$G_0=$\textsf{Pa-$G_1$} ($k=50$)\label{f9d}]{\epsfig{file=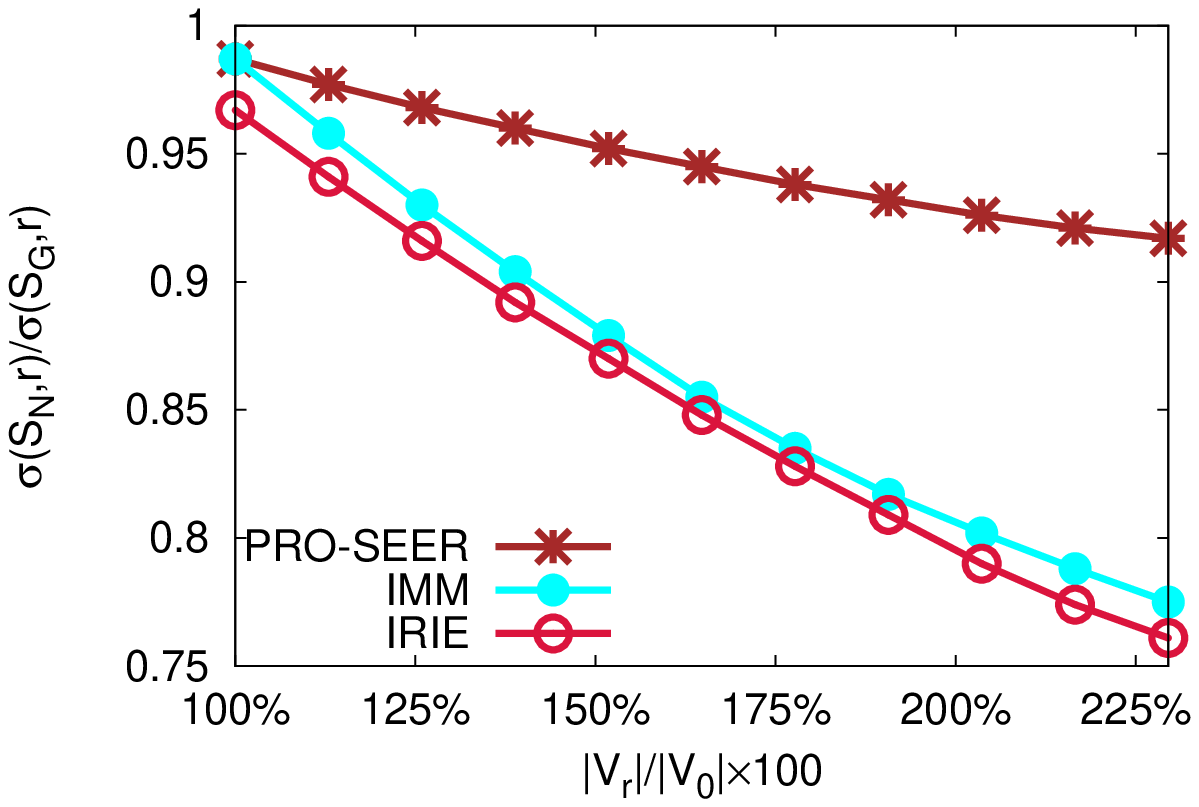, width=0.23\linewidth}}

\vspace{0ex}\caption{\small Effect of degree of change. $S_G$ and $S_N$ are the seeds selected by \emph{Greedy} on $G_r$ and by different algorithms on $G_0$, respectively.}
\label{fig9}
\hrule
\vspace{0ex}\end{figure*}

\textbf{\underline{Seeds at current and target times.}} In Section~\ref{sec:intro}, we remarked that the seeds selected at current time $t_0$ can be significantly different from the seeds selected at target time $t_r > t_0$ due to the evolution of the underlying network. Hence, we first investigate whether this is indeed true. That is, whether the seeds selected by a state-of-the-art \textsc{im} algorithm at $t_0$ differ significantly from those selected using the same algorithm at time $t_r$. To this end, we take a pair of current and target networks ($G_0$, $G_r$) at time points $t_0$ and $t_r$. We plot the ranked seed nodes in $G_r$ on the X-axis by running a classical \textsc{im} algorithm on it, which can be considered as the ground-truth seed set. For clarity, we only consider the top-10 most influential seed nodes (ranked by their expected influence) in $G_r$ that also exist in $G_0$. Specifically, in our experiments these seeds are selected by running \textit{Greedy} on $G_r$. Then, we plot on Y-axis the corresponding ranks of these seeds in $G_0$ by running the \textit{Greedy} and \textsc{proteus-genie} algorithms on $G_0$. Hence, in our plot if a seed $v$ occupies the coordinate $(a,b)$ then it means that $v$ is ranked $a$-th at time $t_0$ (\ie it exhibits the $a$-th maximal marginal expected influence in $G_0$) and ranked $b$-th in $G_r$ at time $t_r$. Consequently, the larger the deviation from $y=x$, the worse is the quality of selected seeds as the seeds selected at $t_0$ differ significantly from the seeds needed to maximize influence at $t_r$ (recall that our goal is to identify $k$ seeds at $t_0$ that maximizes influence spread at $t_r$).

Note that for networks with high degree of change, it is intuitive to expect the seeds to be different in $G_0$ and $G_r$. Hence, we use the synthetic datasets for this experiment as it exhibits low degree of change (can be considered as ``worst'' case scenario). Fig.~\ref{fig3} plots the ranks of the top-10 seeds at times $t_0$ and $t_r$ for different pairs of networks. For instance, in Fig.~\ref{fig3}(a), $G_0$ and $G_r$ are network snapshots \textsf{Syn-$G_1$} and \textsf{Syn-$G_3$}, respectively. We have the following observations. First, the ranks of seeds selected by \textit{Greedy} using $G_0$ deviates significantly from those on $G_r$ for all datasets. That is, the seeds selected by classical \textsc{im} algorithms at times $t_0$ and $t_r$ differ significantly. Consequently, seeds selected at $t_0$ may not be suitable for maximizing the influence at $t_r$ (further validated below). Second, the ranks of the seeds selected by \textsc{proteus-genie} at time $t_0$ are relatively closer to the top-10 seeds in $G_r$ for all datasets, emphasizing the need for reformulating the classical \textsc{im} problem as \textsc{proteus-im} problem.


\textbf{\underline{Effect of degree of change and $t_r$.}} The above experiments demonstrate that the seeds at current and target times are different. We now study how the degree of change to the network impact the influence and seed set. Observe that degree of change is correlated with $t_r$. Intuitively, the longer is the influence propagation time ($t_r$) the greater is the degree of change to the network. First, we investigate the influence spread quality by varying the duration between $G_0$ and $G_r$ (\ie influence propagation time $t_r$). Since influence propagation may take weeks, we use the real-world networks to report the effect of target time $t_r$ by selecting several states of $G_r$ at different target times such that $t_r-t_0$ ranges from 0 to 15 weeks. The results are shown in Figures~\ref{fig9}(a) and (b), where $G_0$ are \textit{Pa-$G_1$} and \textit{Pa-$G_3$}, respectively. Note that the procedure to extract a $G_r$ at a specific week is same as the one to extract different snapshots in the \textit{Patents} network (Section~\ref{sec:setup}). For instance, in Fig.~\ref{fig9}(a) we fix $G_0$ as \textsf{Pa-$G_1$} and acquire each temporal state of the network at $1, 2, \ldots, 15$ weeks after \textsf{Pa-$G_1$} by selecting all citations (edges) that appear before the corresponding week. We evaluate the influence spread of the seeds selected from $G_0$ by different algorithms to those selected from different states of $G_r$ using \textit{Greedy} (run directly on $G_r$). Here $S_N$ denotes the seeds selected by different algorithms running on $G_0$ while $S_G$ denotes the ideal solution by running \emph{Greedy} in $G_r$. Therefore, the Y-axis shows the ratio that compares the expected influence for different algorithms to that of running \emph{Greedy} in $G_r$. When $G_r=G_0$ (\ie $t_r=t_0$), the problem degenerates to classical \textsc{im}. Consequently, seeds of all algorithms share almost the same quality. However, as $t_r$ increases, the quality of seeds set selected by different algorithms at $G_0$ decreases. Clearly, in comparison to classical \textsc{im} techniques, our \textsc{proteus-seer} exhibits the highest influence spread quality for $t_r > 0$.

\begin{figure*}[t]
\centering
\subfloat[][Low DoC\label{f4b}]{\epsfig{file=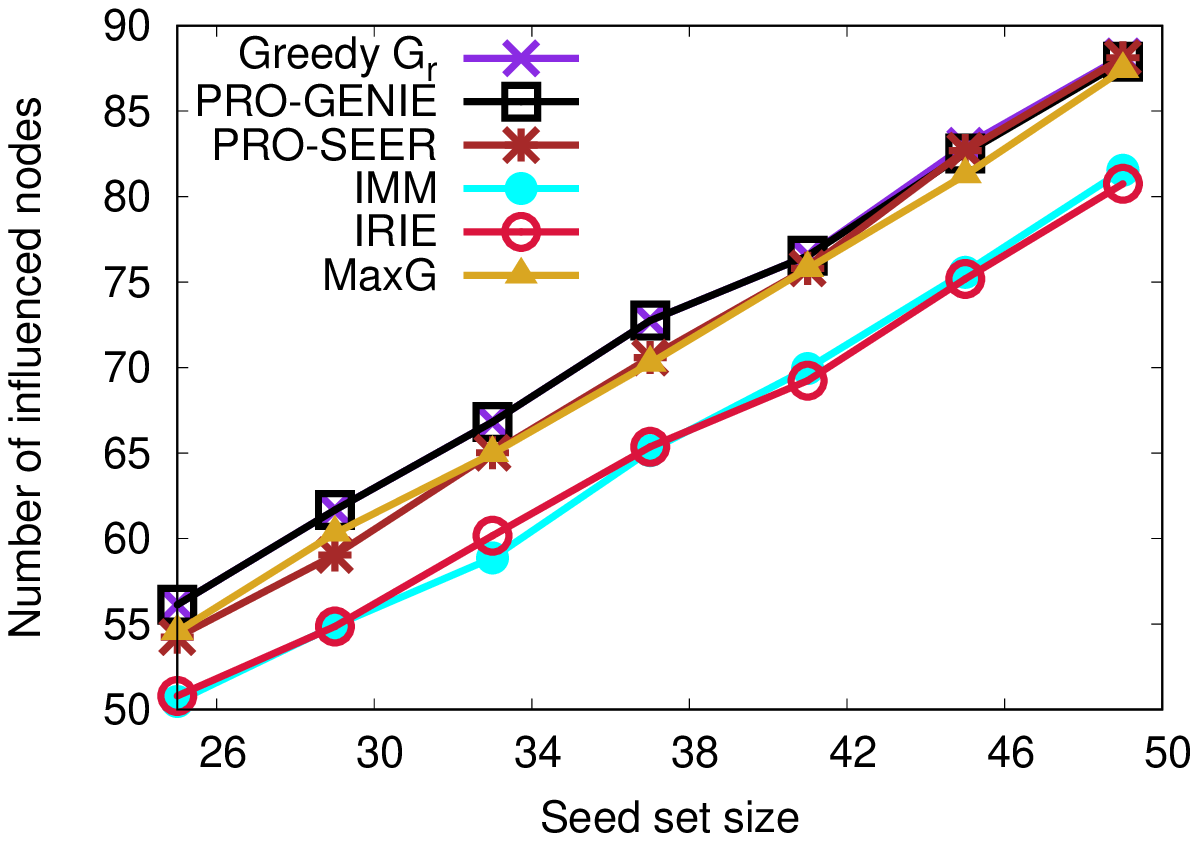, width=0.23\linewidth}}
\subfloat[][High DoC\label{f4d}]{\epsfig{file=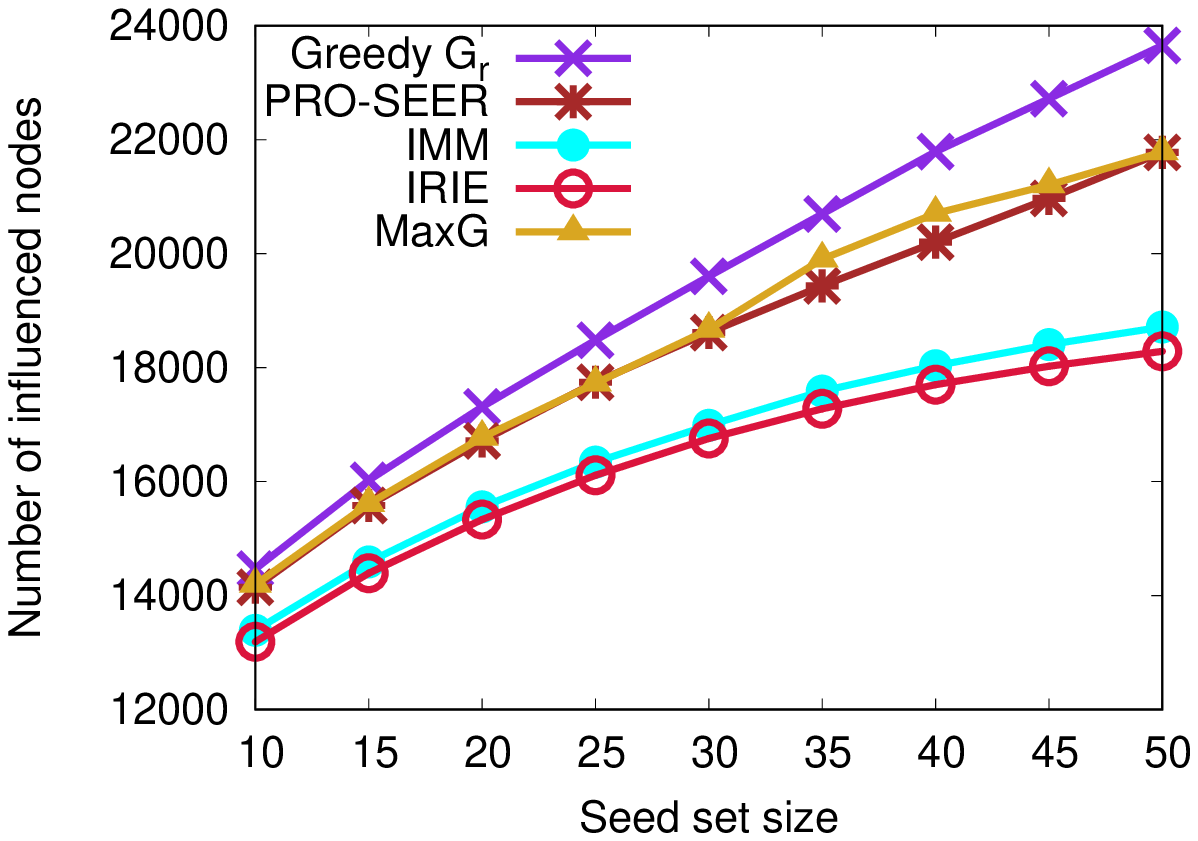, width=0.23\linewidth}}
\subfloat[][High DoC\label{f4c}]{\epsfig{file=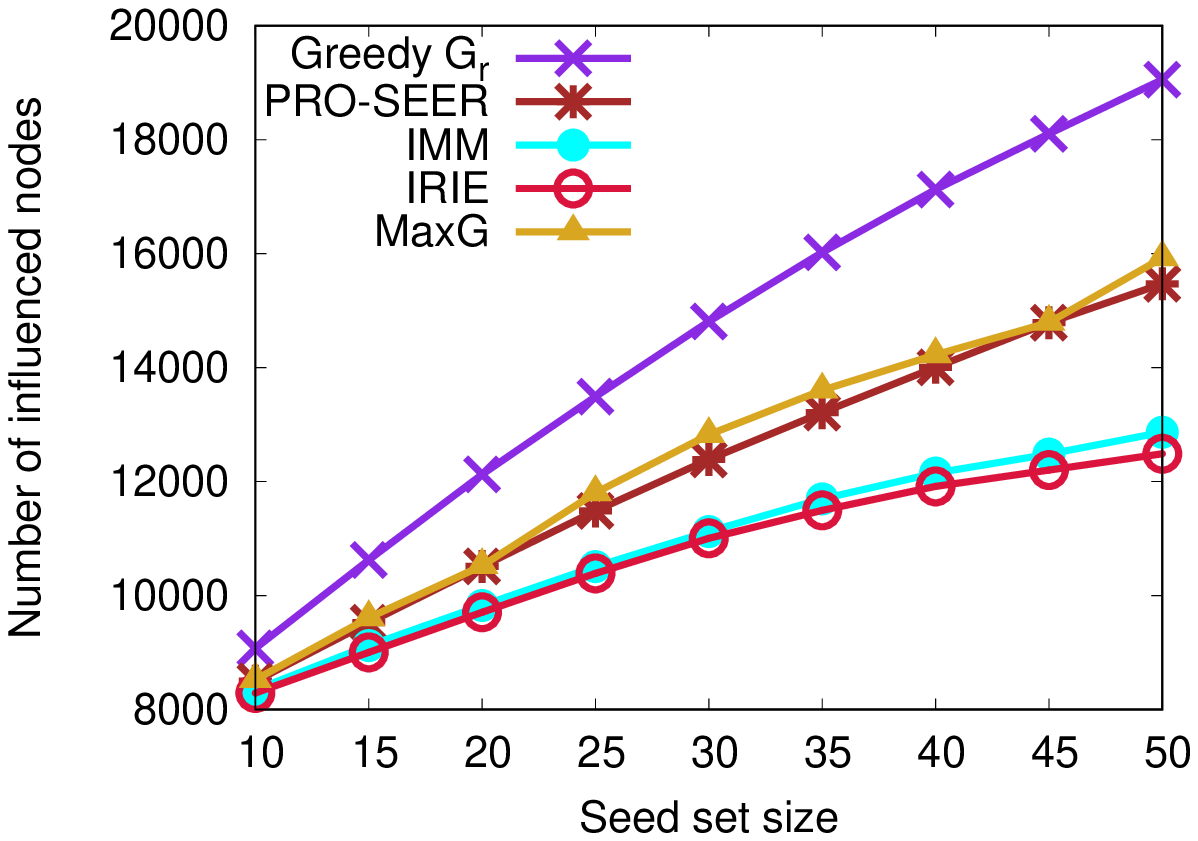, width=0.23\linewidth}}
\subfloat[][Moderate DoC\label{f4e}]{\epsfig{file=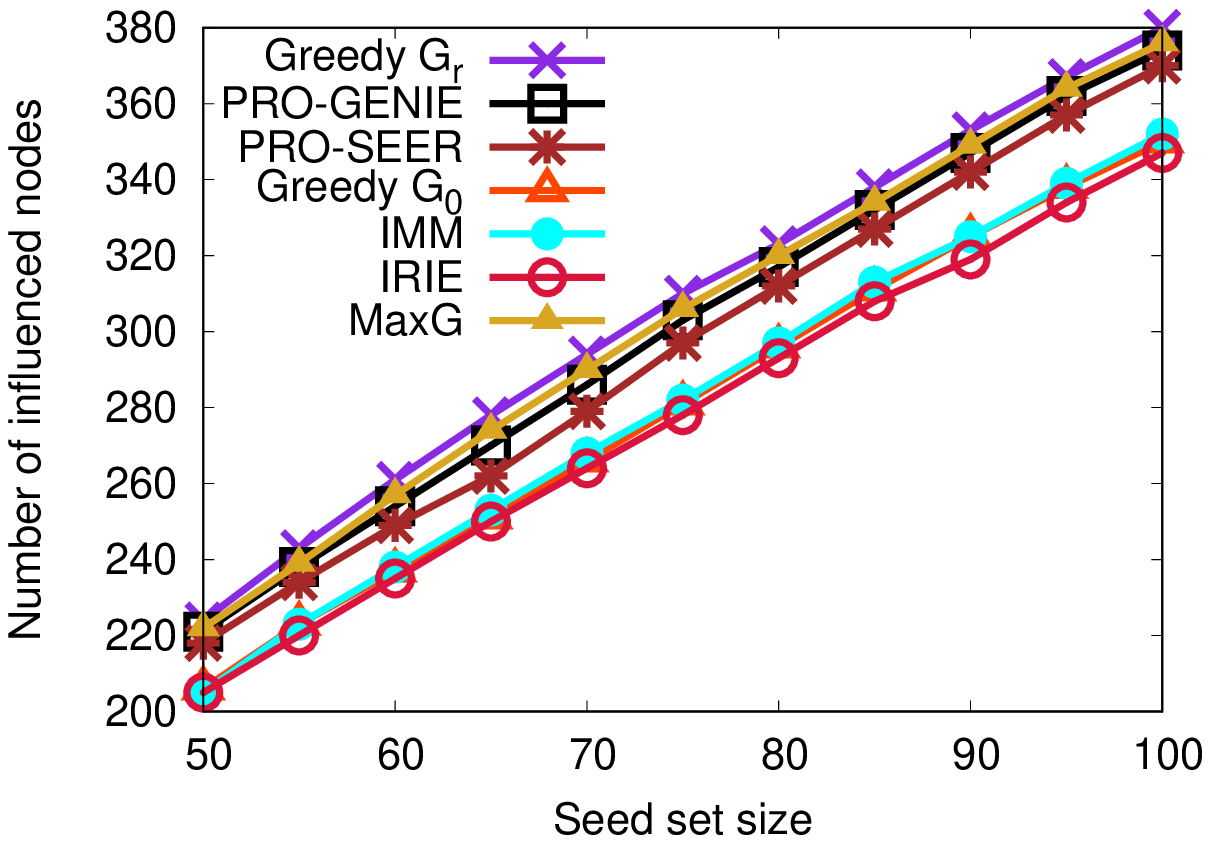, width=0.23\linewidth}}
\vspace{0ex}\caption{Influence spread of different seeds sets. (a) $G_0$ = \textsf{Syn-$G_2$}, $G_r$ = \textsf{Syn-$G_3$}; (b) $G_0$ = \textsf{Pa-$G_3$}, $G_r$ = \textsf{Pa-$G_4$}; (c) $G_0$ = \textsf{Pa-$G_1$}, $G_r$ = \textsf{Pa-$G_3$}; (d) $G_0$ = \textsf{Ph-$G_1$}, $G_r$ = \textsf{Ph-$G_2$}.}\label{fig4}
\hrule
\vspace{0ex} \end{figure*}

Next, we vary the effect network change rate by selecting several states of $G_r$ such that $1 \leq |V_r|/|V_0| \leq 2.25$. Note that this simulates different degree of change to the topology of the network $G_0$ at $t_r$. We evaluate the influence spread of the seeds selected from $G_0$ by different algorithms to those selected from different state of $G_r$ using \textit{Greedy} (run directly over $G_r$). The results are shown in Figures~\ref{fig9}(c) and (d). When $G_r=G_0$ (\ie $|V_r|/|V_0|=1$), the problem degenerates to classical \textsc{im}. As the difference between $G_r$ and $G_0$ increases, the quality of seeds set selected by different algorithms at $G_0$ decrease. Clearly, our proposed techniques exhibit the highest influence spread quality.

\eat{We fix $G_r$ as \textsf{Syn-$G_3$}, and run \textsc{proteus-seer} on \textsf{Syn-$G_1$} and \textsf{Syn-$G_2$} (\ie $G_0$). \eat{The influence spread quality is reported in Fig.~\ref{fig9}(a). Obviously, running our algorithm at a nearer temporal state, namely \textit{Syn-$G_2$}, will generate better quality seeds. Similar phenomenon is also observed in real-world dataset \emph{Pa-$G_1$} to \emph{Pa-$G_4$} (Fig.~\ref{fig9}(b)).} In our experiments over synthetic dataset, our algorithm still performs at least 13\% better comparing to the best competitors when there is at least 5\% difference in the edge set between $G_0$ and $G_r$.}

\textbf{\underline{Influence spread for different $k$.}} Next, we simulate the influence spread of selected seeds for networks with varying $k$ and investigate whether state-of-the-art \textsc{im} algorithms exhibit similar or different influence spread quality compared to our proposed algorithms for the \textsc{proteus-im} problem. Fig.~\ref{fig4} plots the influence spreads (with influence probability $p=0.01$) for different $k$ for networks exhibiting different DoC. In each of the figures, we select top-$k$ seeds in $G_0$ using \emph{Greedy}, \textsc{irie}, \textsc{imm}, \emph{MaxG}, \textsc{proteus-genie}, and \textsc{proteus-seer} and simulate the influence spread process in $G_r$. The influence spread is measured by the number of eventually influence nodes that is averaged over 10,000 simulations. We compare the influence spread results with that of the seeds selected using \emph{Greedy} in $G_r$, which can be viewed as the ground-truth seeds set. Note that closer the influence spread (computed by a specific technique) is to this ground-truth seed set, the better is its influence spread quality. \eat{Similarly, the work in~\cite{DBLP:conf/sdm/AggarwalLY12} is also impractical, as it works (\ie select $k$ seeds in $G_0$ at $t_0$) only when $G_0, \ldots, G_r$ are known, which is impossible. What's more, it cannot provide ideal solution as ``\emph{Greedy} within $G_r$''. Therefore, we select to adopt ``\emph{Greedy} within $G_r$'' in the experimental study. On the other hand, ~\cite{DBLP:conf/icdm/ZhuangSTZS13} is based on strict assumptions as discussed in Section~\ref{sec:relwork}, which our scenario or dataset cannot satisfy, it cannot be applied in our problem settings.} Specifically, $(G_0, G_r)$ are chosen to represent different degree of change (DoC).

Observe that \textsc{proteus-genie} and \emph{MaxG} achieve the best influence spread quality, followed by our heuristic approach \textsc{proteus-seer}. Notably, \emph{MaxG} iteratively updates the selected seeds whenever a new node arrives in the network.  Interestingly, despite the impractical assumptions made by \textit{MaxG} as mentioned in Section~\ref{sec:setup}, it cannot provide distinguishable performance benefit compared to our algorithms. \textit{In other words, \textsc{proteus-genie} and \textsc{proteus-seer} demonstrate comparable seed set quality without  assuming the knowledge of complete topology of the target network (unlike MaxG). }
In summary, the influence spread of our proposed approaches are within 83\% - 99\% of the ideal solution. In contrast, the state-of-the-art classical \textsc{im} approaches only achieve 65\%-89\% of the ideal influence spread.

\begin{figure*}[t]
\centering
\vspace{0ex}
\subfloat[][$G_0=$ \textsf{Syn-$G_1$}, $G_r=$ \textsf{Syn-$G_3$}\label{f5a}]{\epsfig{file=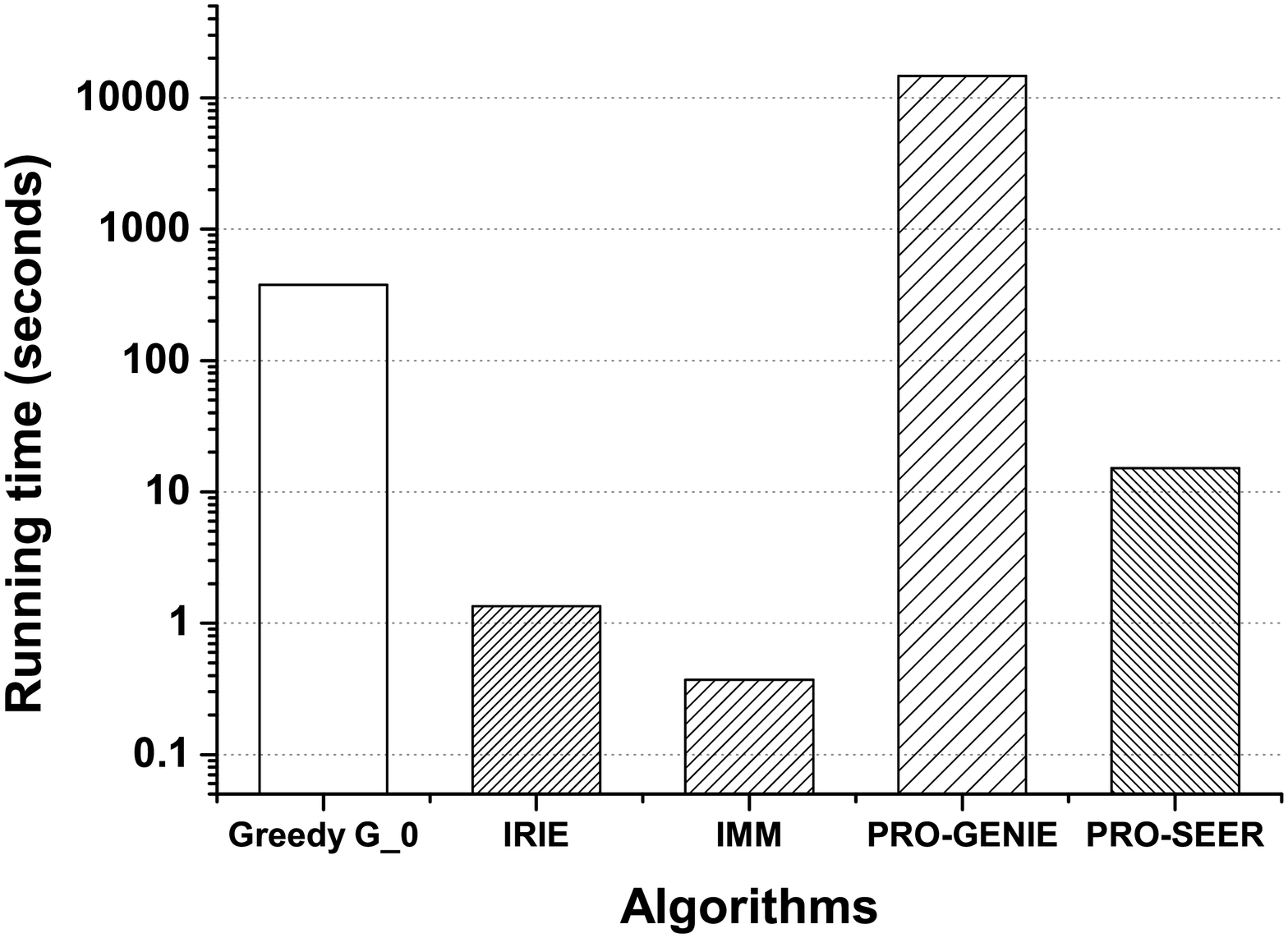, width=0.23\linewidth}}
\subfloat[][$G_0=$ \textsf{Ph-$G_1$}, $G_r=$ \textsf{Ph-$G_2$}\label{f5b}]{\epsfig{file=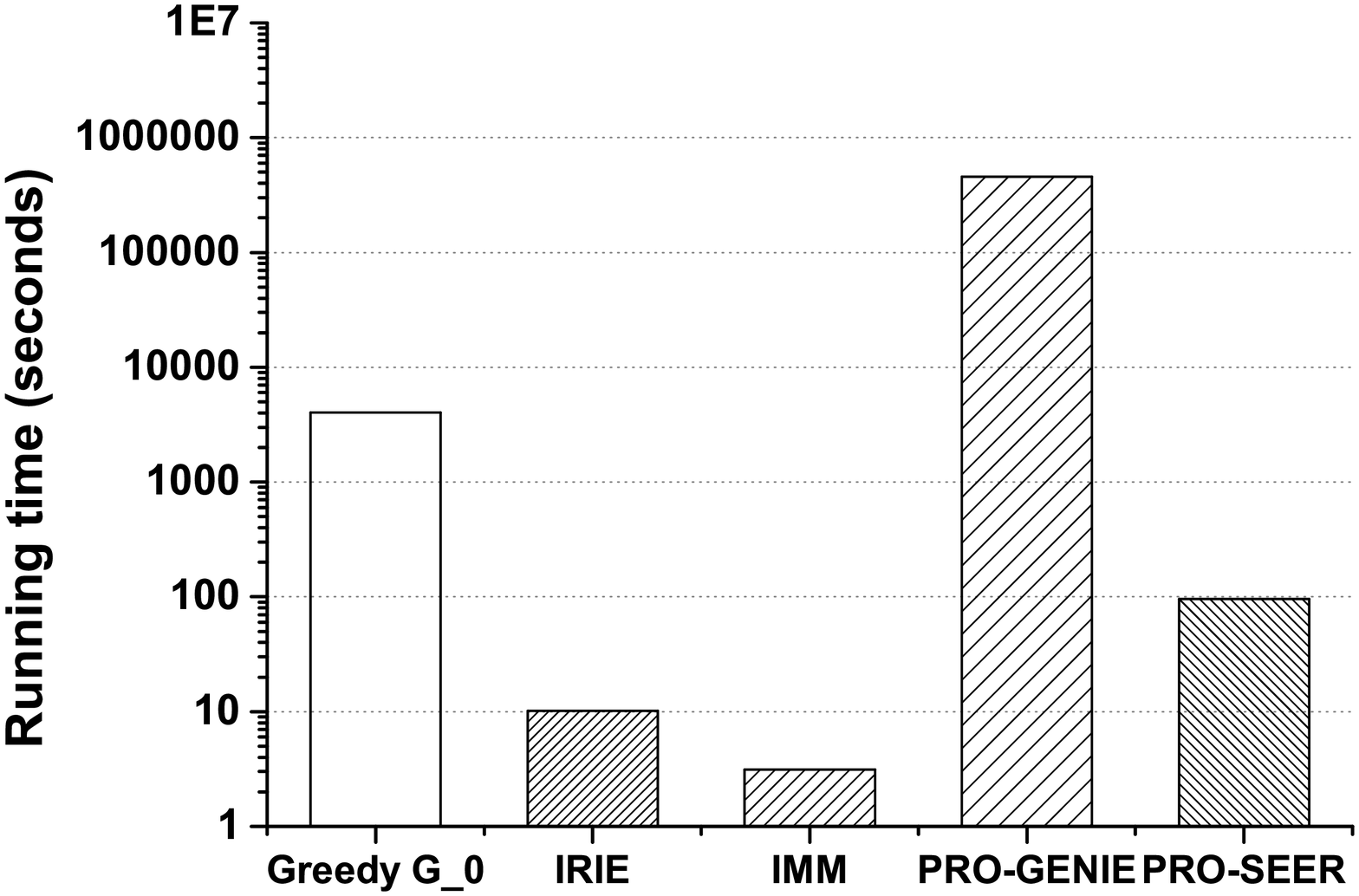, width=0.23\linewidth}}
\subfloat[][$G_0=$\textsf{Pa-$G_1$}, $G_r=$\textsf{Pa-$G_3$}\label{f8b}]{\epsfig{file=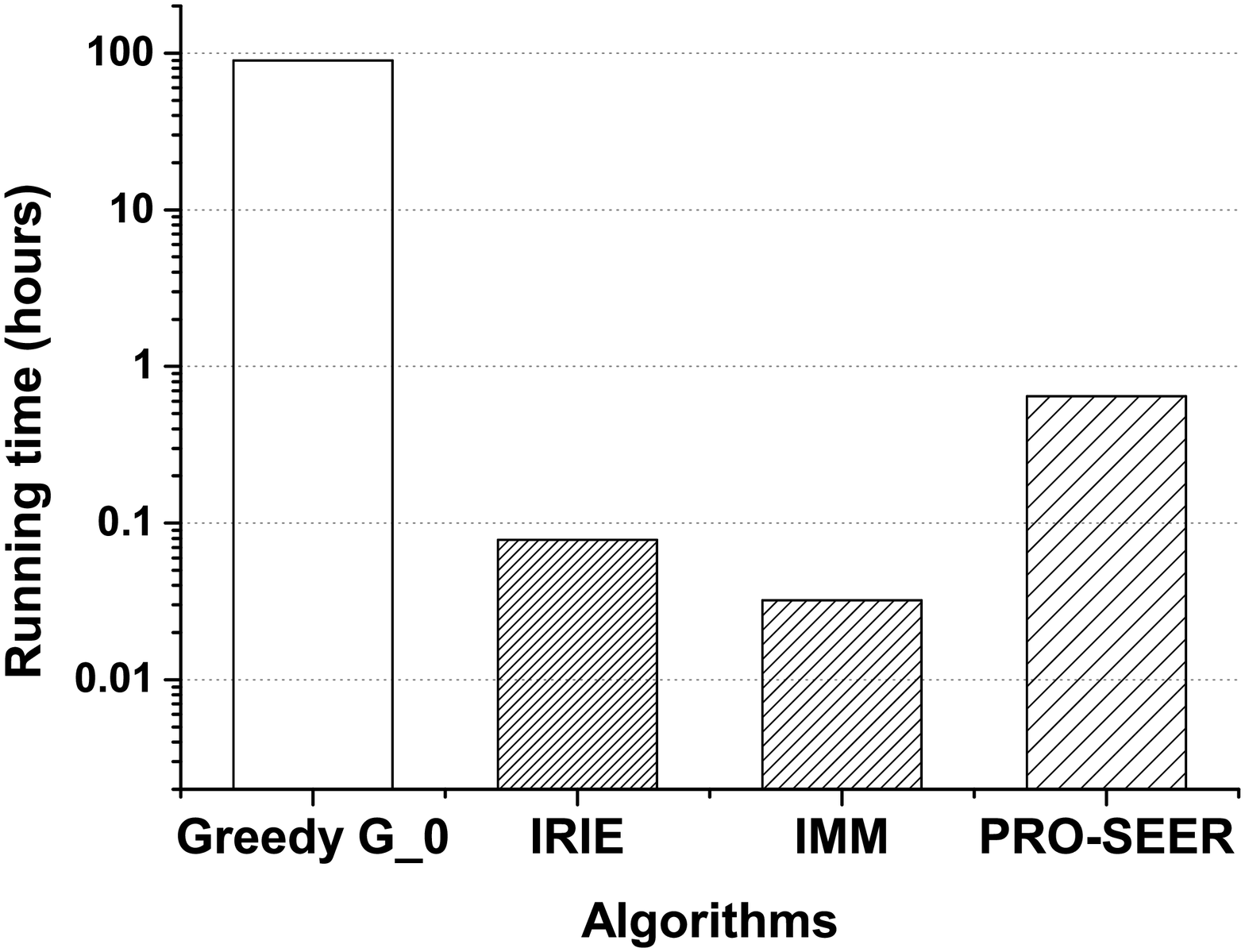, width=0.23\linewidth}}
\subfloat[][$G_0=$\textsf{Pa-$G_3$}, $G_r=$\textsf{Pa-$G_4$}\label{f8d}]{\epsfig{file=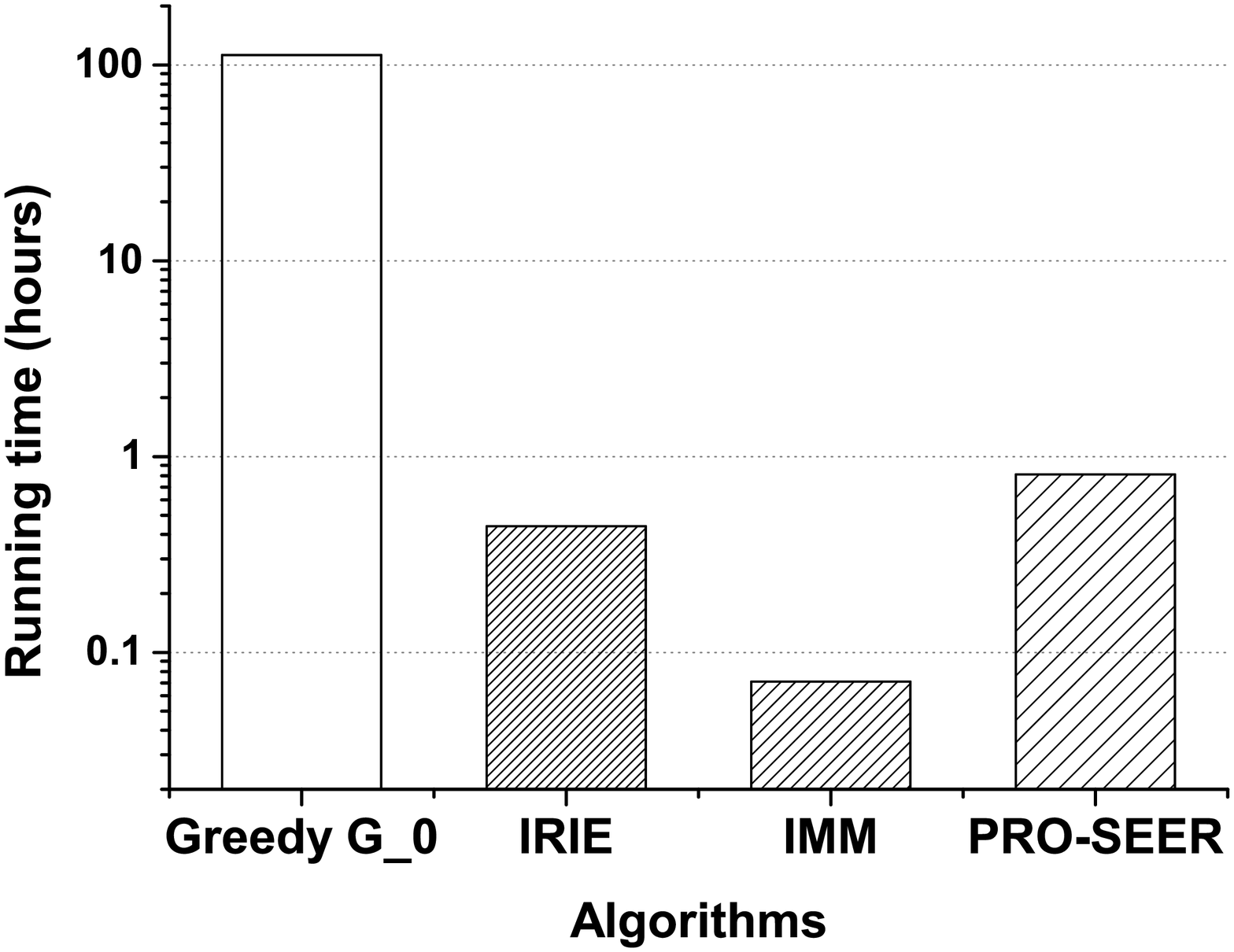, width=0.23\linewidth}}
\vspace{0ex}\caption{Running times of different algorithms.}\label{fig5}

\hrule
\vspace{0ex}\end{figure*}


\textbf{\underline{Running times.}} Fig.~\ref{fig5} reports the running times of different algorithms for different DoC. Specifically, we run \emph{Greedy}, \textsc{irie}, \textsc{imm}, \textsc{proteus-genie}, and \textsc{proteus-seer} on $G_0$\footnote{ \emph{MaxG} keeps on running during the evolution of a network in contrast to all other competitors. Hence, for fair comparison its running time is not included.} for the three different datasets. Observe that although \textsc{proteus-genie} produces most accurate results, it also consumes the longest time as it requires $I$ iterations of network evolution simulations. On the other hand, \textsc{proteus-seer} is significantly faster than \textsc{proteus-genie} as the former avoids huge number of iterations caused by $R$ and $I$.  Our \textsc{proteus-seer} finishes within an hour on the largest network while providing near-optimal influence spread quality. Therefore, \textsc{proteus-seer} is suitable for time-sensitive tasks and gives a good balance between influence spread quality and running time. Note that although our techniques are slower than \textsc{irie} and \textsc{imm}, as reported earlier, these approaches have poorer influence spread quality. It is important to reemphasize that the seed set quality is paramount to companies as they would like to maximize the influence spread of their products.

\begin{figure}[t]

\centering
\subfloat[][Low DoC \label{f5c}]{\epsfig{file=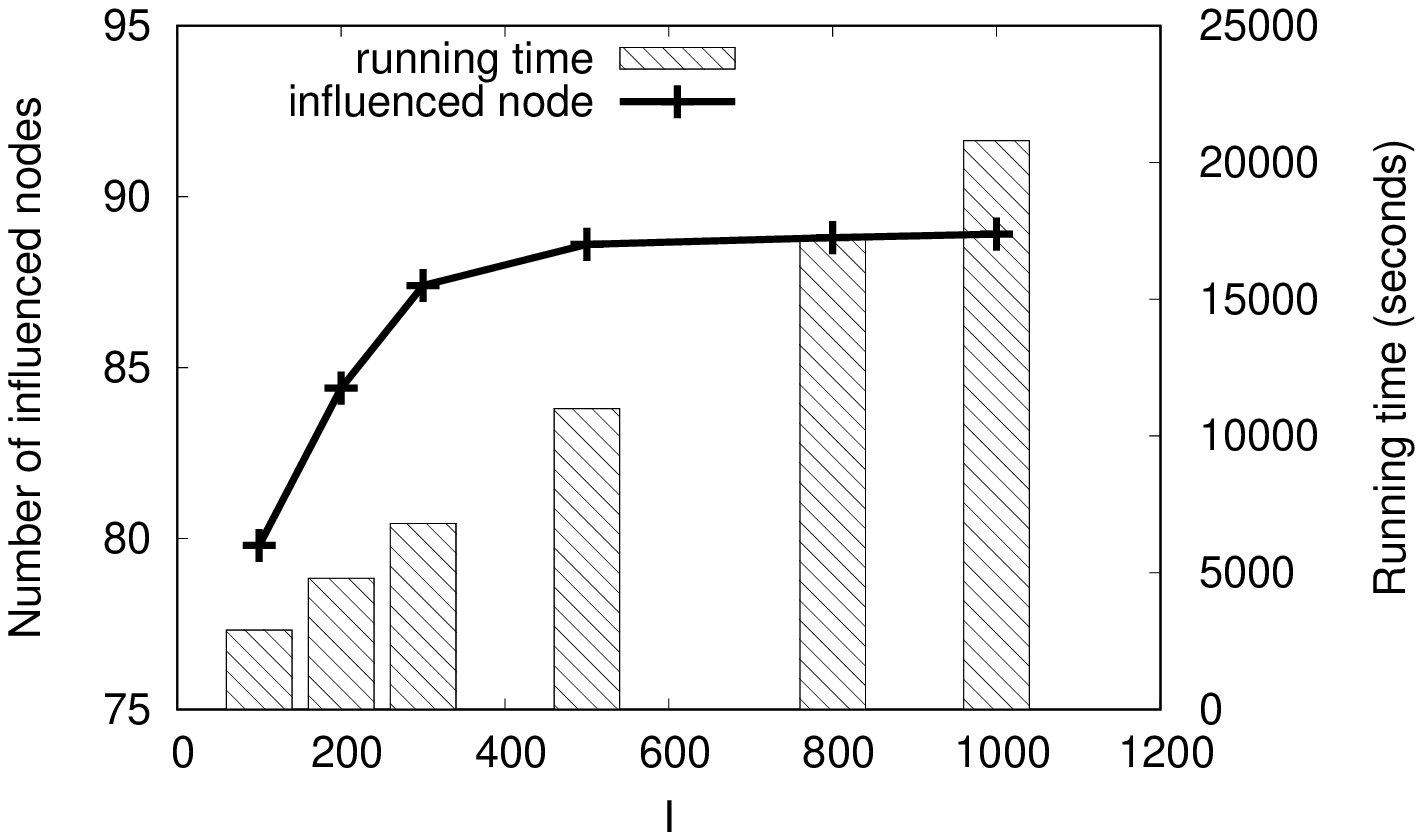, width=0.45\linewidth}}
\hspace{2ex}\subfloat[][Moderate DoC]{\epsfig{file=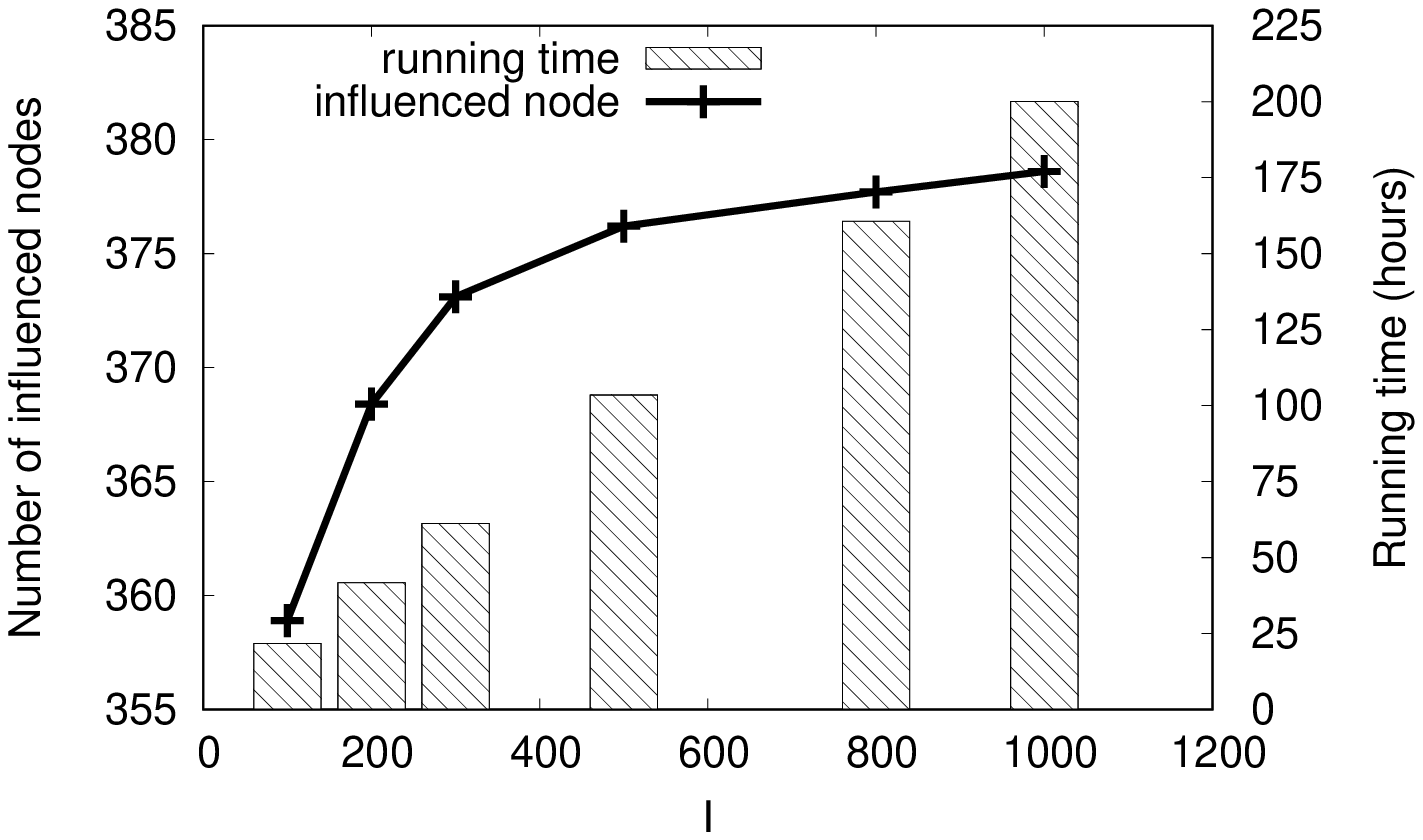, width=0.45\linewidth}}
\vspace{0ex}\caption{Effect of $I$ ($k=50$). (a) $G_0=$ \textsf{Syn-$G_1$}, $G_r=$ \textsf{Syn-$G_2$}; (b) $G_0=$ \textsf{Ph-$G_1$}, $G_r=$ \textsf{Ph-$G_2$}.}\label{Ieffect}
\hrule
\vspace{0ex}\end{figure}

\eat{\begin{figure*}[t]
\centering

\subfloat[][$G_0=$Pa-$G_1$, $G_r=$Pa-$G_3$\label{f8a}]{\epsfig{file=infsprdpa1pa3.eps, width=0.23\linewidth}}
\subfloat[][$G_0=$Pa-$G_3$, $G_r=$Pa-$G_4$\label{f8c}]{\epsfig{file=infsprdpa2pa3.eps, width=0.23\linewidth}}
\vspace{0ex}\caption{Comparison in \emph{Patents} dataset: (a,c) influence spread; (b,d) running time}\label{fig8}
\hrule
\vspace{0ex}
\end{figure*}

\textbf{\underline{Application in Large Networks.}} Next, we investigate the performance of our algorithms on a real-world large network, \emph{Patents}, over the temporal states as described in last four rows of Table~\ref{t:data}. We compare the performances of \textsc{proteus-seer}\footnote{ We do not report \textsc{proteus-genie} as it does not finish within 5 days.}, \textsc{imm}, and \textsc{irie}  (all on $G_0$) with \emph{Greedy} on $G_r$ (\ie ideal seed set). The results are shown in Fig.~\ref{fig8}. In particular, the influence spread of \textsc{proteus-seer} is within 80\% and 83\% of the ideal one in Fig.~\ref{fig8}(a) and (c), respectively. More importantly, the influence spread quality is once again consistently superior to state-of-the-art \textsc{im} techniques. Its performance is comparable to \textit{MaxG} due to the reasons mentioned above.}

\textbf{\underline{Effect of $I$.}} Intuitively, if we increase the number of instances of predicted target network (\ie $I$), it may increase the running time. Fig.~\ref{Ieffect} reports the running times by varying $I$ in \textsc{proteus-genie}. Obviously, the running time increases almost linearly with respect to $I$, which is consistent with Theorem~\ref{the:alg1comp}. We also investigated the influence spread quality by varying $I$. Clearly, as $I$ increases the quality of selected seeds also improve. However, the improvement also follows a diminishing return pattern. If $I$ exceeds $500$, the improvement in seeds quality is within 1\%. Note that this phenomenon is favorable to our framework as we do not need to set $I$ to a very large value in order to find superior quality seeds.

\begin{figure*}[t]
\centering
\subfloat[][$G_0=$\textsf{Pa-$G_1$}, $G_r=$\textsf{Pa-$G_3$} (High DoC)\label{f10a}]{\epsfig{file=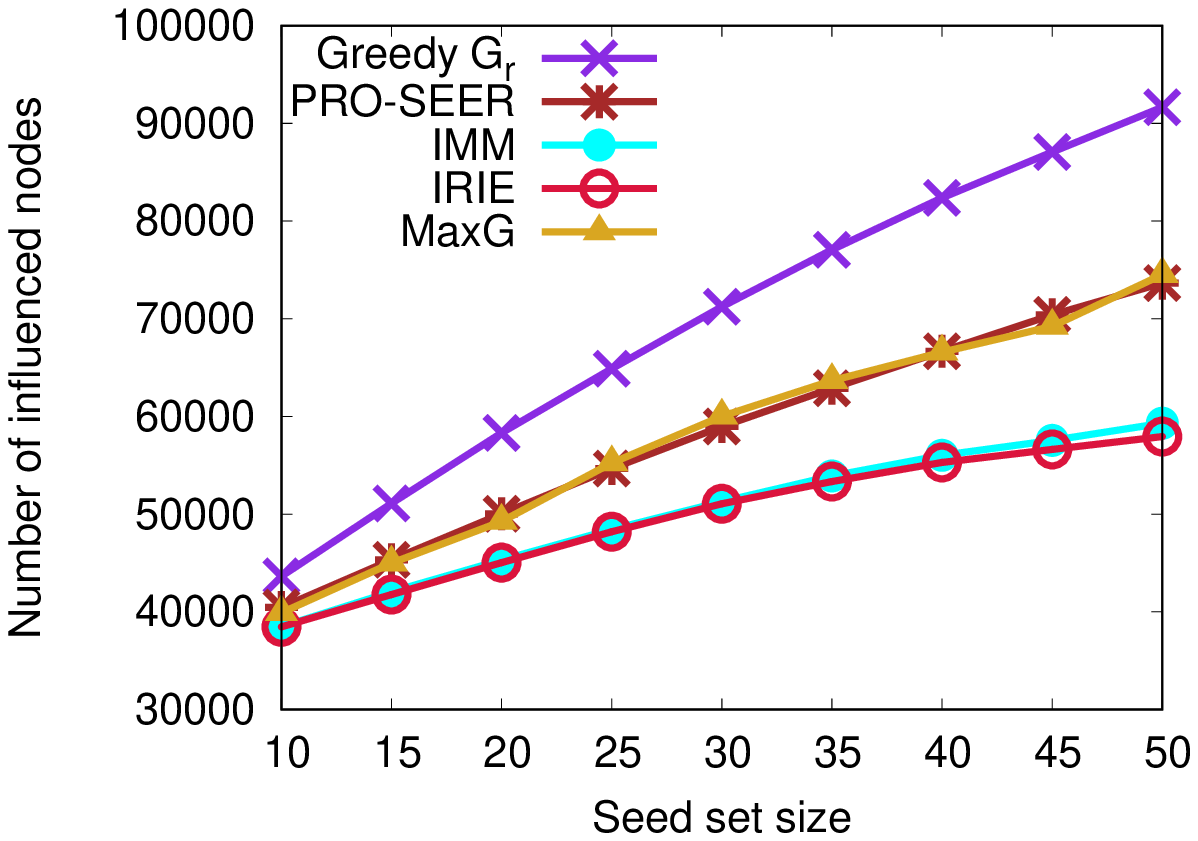, width=0.32\linewidth}}
\subfloat[][$G_0=$\textsf{Pa-$G_3$}, $G_r=$\textsf{Pa-$G_4$} (High DoC)\label{f10c}]{\epsfig{file=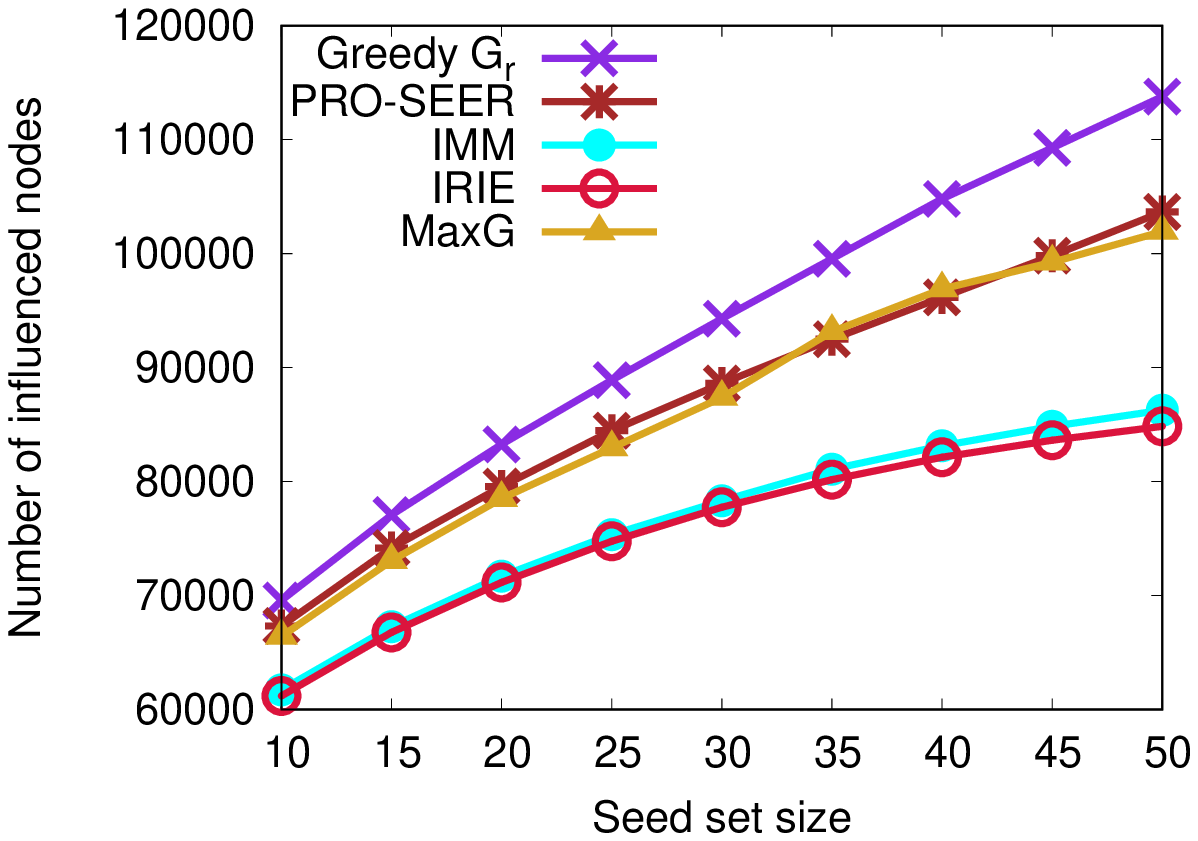, width=0.32\linewidth}}
\subfloat[][$G_0=$\textsf{Ph-$G_1$}, $G_r=$\textsf{Ph-$G_2$} (Moderate DoC)\label{f10d}]{\epsfig{file=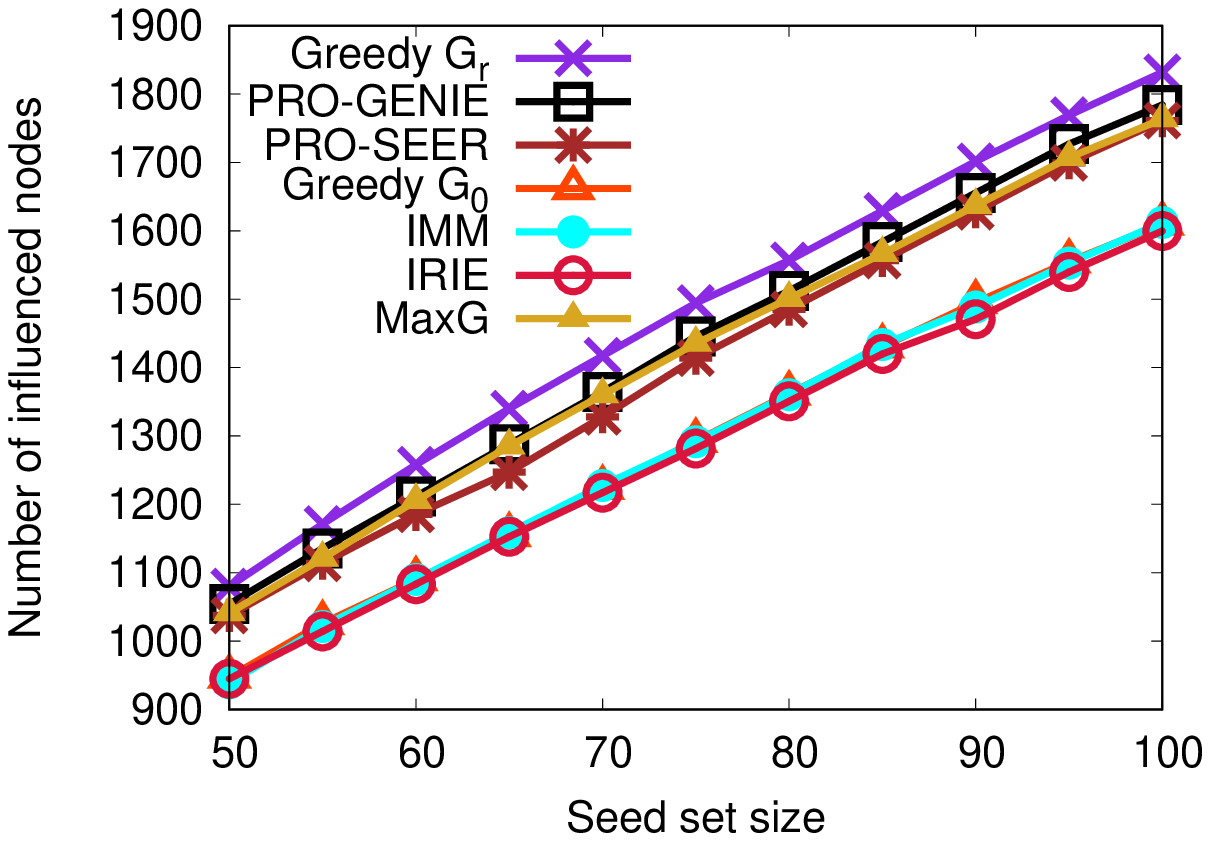, width=0.32\linewidth}} \\\vspace{0ex}

\subfloat[][$G_0=$\textsf{Pa-$G_1$}, $G_r=$\textsf{Pa-$G_3$} (High DoC)\label{f10e}]{\epsfig{file=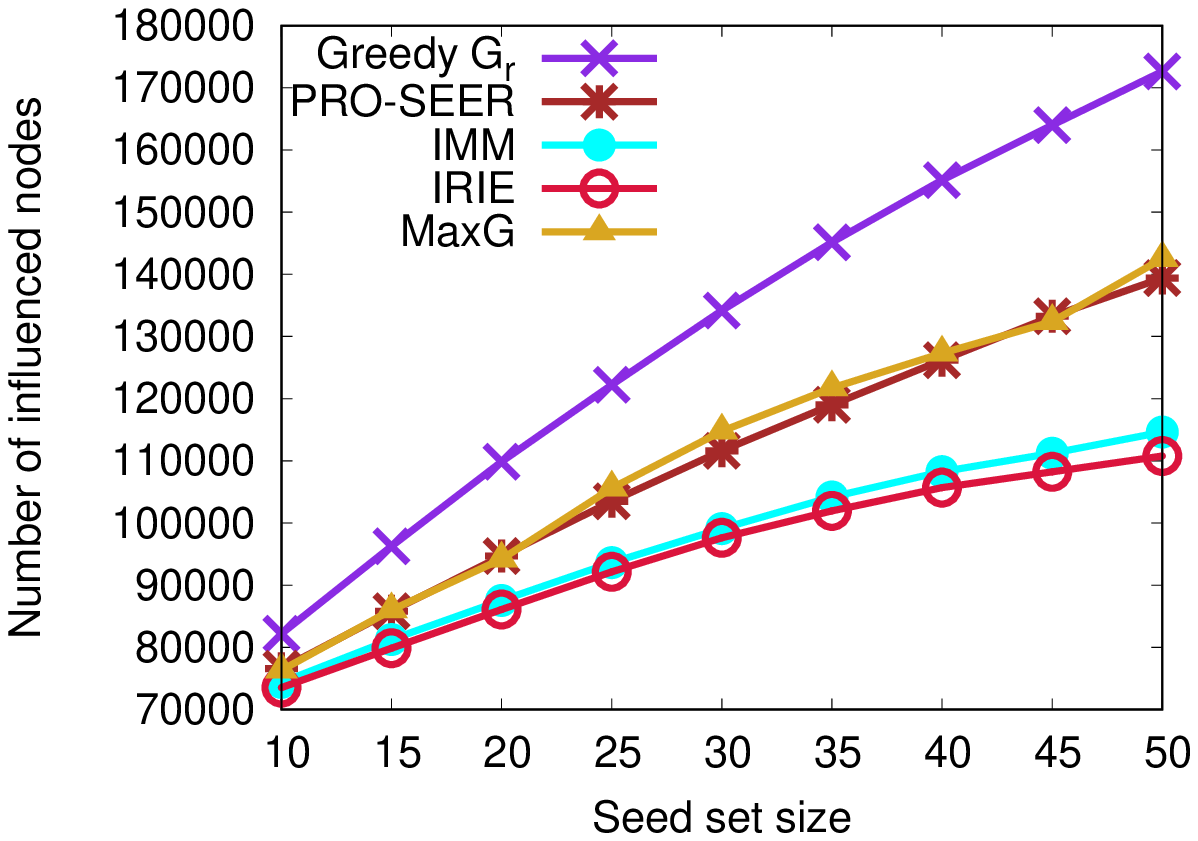, width=0.32\linewidth}}
\subfloat[][$G_0=$\textsf{Pa-$G_3$}, $G_r=$\textsf{Pa-$G_4$} (High DoC)\label{f10g}]{\epsfig{file=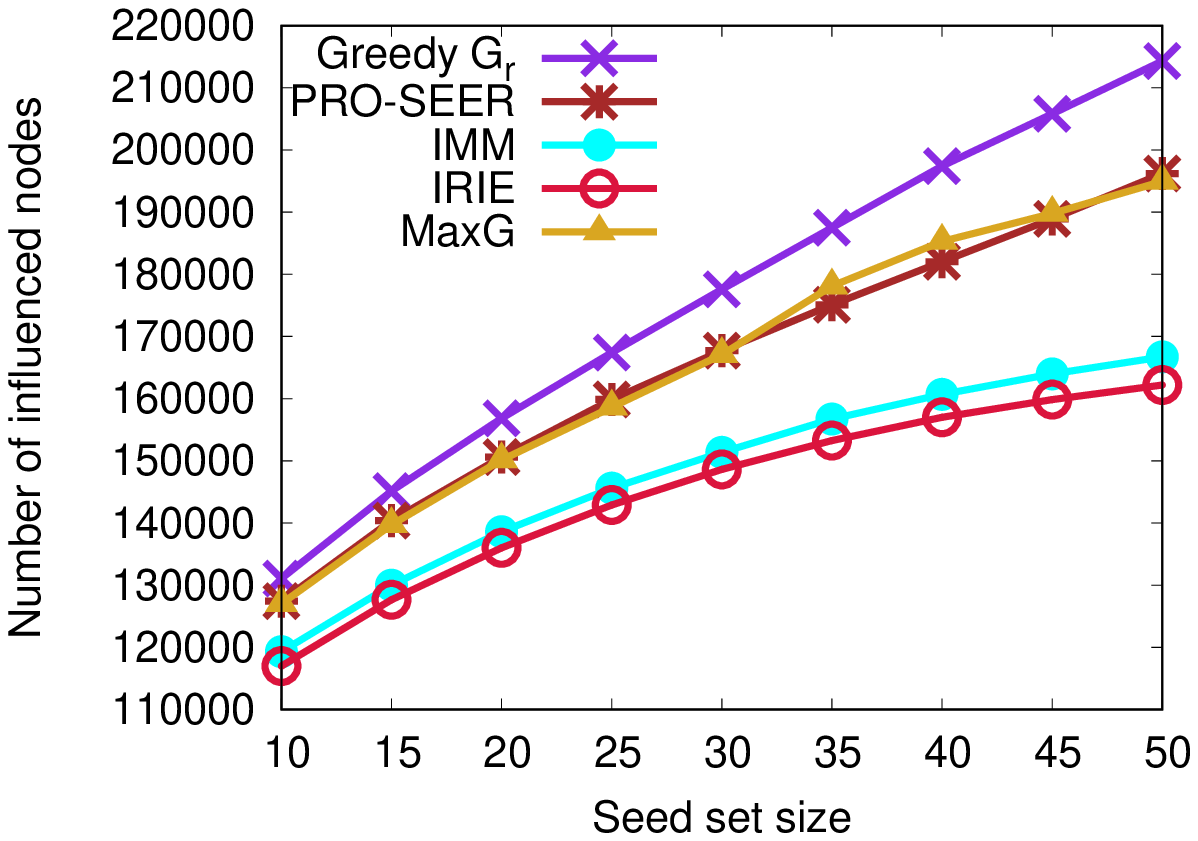, width=0.32\linewidth}}
\subfloat[][$G_0=$\textsf{Ph-$G_1$}, $G_r=$\textsf{Ph-$G_2$} (Moderate DoC)\label{f10h}]{\epsfig{file=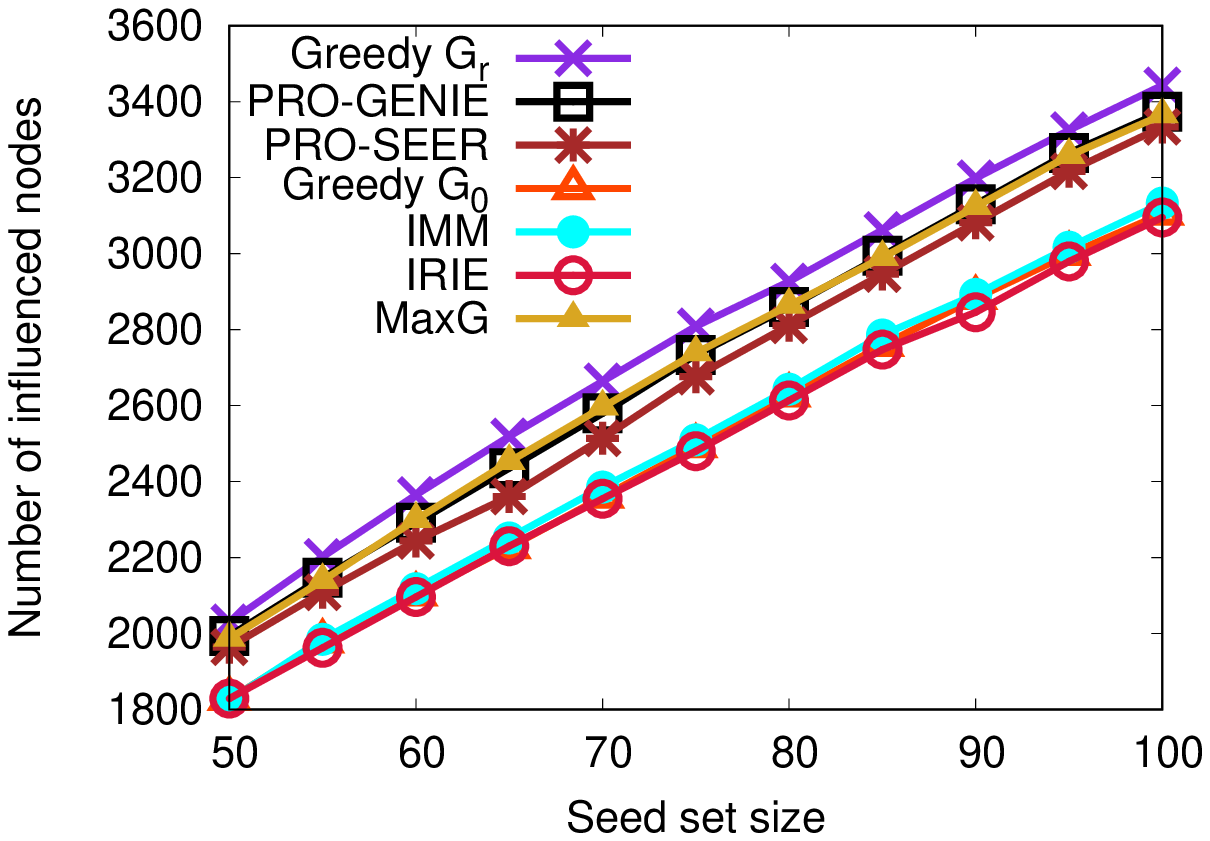, width=0.32\linewidth}}
\vspace{0ex}\caption{Influence spread vs $p$: (a-c) $p=0.05$; (d-f) $p=0.1.$}\label{fig10}
\hrule
\vspace{0ex}\end{figure*}

\textbf{\underline{Effect of $p$.}} Lastly, we test the influence spread quality of networks with different degree of change by varying $p$ to $0.05$ and $0.1$, respectively. Fig.~\ref{fig10} reports the influence spread quality for different experimental settings. Observe that the influence spread quality of different algorithms are qualitatively similar to those reported earlier for $p=0.01$.

\vspace{0ex}\section{Conclusions \& Future Work}\label{sec:conclu}
The classical influence maximization (\textsc{im}) problem as introduced in~\cite{Kempe2003} has been extensively studied in the literature. It has been a solid foundation for many subsequent algorithmic improvements to the \textsc{im} problem with non-trivial performance guarantees as well as novel variations to the original problem. Unfortunately, due to the inherent limitation of the problem definition, existing techniques will often generate suboptimal seeds, unless very unrealistic changes happen to the real-world social networking landscape (either social networks stop evolving during influence propagation or influence propagation takes negligible time).

In this paper, we present \textsc{proteus-im} problem, which is designed to replace the classical version by assuming that influence propagation time is not negligible, during which the underlying network evolves. Hence, it aims to find seeds in the current network that maximizes influence spread in a future instance of the network, while considering the evolution of the network during influence propagation. We propose a pair of algorithms called \textsc{proteus-genie} and \textsc{proteus-seer}, respectively, to address the \textsc{proteus-im} problem. Specifically, our algorithms address the challenge of unknown topology of the target network by predicting it using a network evolution model and then leveraging existing cascade models on this predicted network to discover seeds. Experimental results conducted over a couple of real-world datasets demonstrate that our proposed algorithms consistently outperform state-of-the-art ones designed to address the classical \textsc{im} problem.

We believe that the \textsc{proteus-im} problem can be the foundation for future research on \textsc{im} in several directions. First, it has the potential to catalyze the research community to revisit theoretical and practical aspects of existing \textsc{im} solutions (\eg algorithms, theoretical guarantees) and different variations of traditional \textsc{im} problems. Second, the state-of-the-art Forest Fire Model, which we have utilized to model network evolution, despite demonstrating superior performance in capturing network evolution, does not incorporate node or edge deletions. Hence, exploration of a more \textit{generalized} network evolution model can further enhance the influence spread quality of our proposed solutions.

%

\balance
\end{document}